\documentclass[a4paper,USenglish,cleveref,autoref,thm-restate]{lipics-v2021}

\pdfoutput=1 
\hideLIPIcs

\makeatletter
\ifx\@hideLIPIcs\@undefined\else
\let\@oddfoot\@empty
\fi
\makeatother
\nolinenumbers

\title{Neighborhood-Aware Graph Labeling Problem}
\titlerunning{Neighborhood-Aware Graph Labeling Problem}

\author{Mohammad Shahverdikondori\footnote{Equal contribution.
Author order between the first two authors was determined randomly.}}{EPFL, Switzerland}{mohammad.shahverdikondori@epfl.ch}{}{}
\author{Sepehr Elahi\footnotemark[1]}{EPFL, Switzerland}{sepehr.elahi@epfl.ch}{}{}
\author{Patrick Thiran}{EPFL, Switzerland}{patrick.thiran@epfl.ch}{}{}
\author{Negar Kiyavash}{EPFL, Switzerland}{negar.kiyavash@epfl.ch}{}{}

\authorrunning{M.\ Shahverdikondori, S.\ Elahi, P.\ Thiran, N.\ Kiyavash}

\ccsdesc[500]{Theory of computation~Graph algorithms analysis}
\ccsdesc[300]{Theory of computation~Parameterized complexity and exact algorithms}
\ccsdesc[300]{Theory of computation~Approximation algorithms analysis}

\keywords{graph labeling, graph squares, treewidth, fixed-parameter algorithms, SETH-tight lower bounds, hardness of approximation, planar graphs}

\relatedversion{}

\EventLongTitle{53rd EATCS International Colloquium on Automata, Languages, and Programming (ICALP)}
\EventShortTitle{ICALP 2026}
\EventAcronym{ICALP}
\EventYear{2026}
\EventDate{July 7--10, 2026}
\EventLocation{London, United Kingdom}

\usepackage[utf8]{inputenc} 
\usepackage[T1]{fontenc}    
\usepackage{url}            
\usepackage{cite}           
\usepackage{hyperref}       
\usepackage{booktabs}       
\usepackage{amsfonts}       
\usepackage{nicefrac}       
\usepackage{microtype}      
\usepackage{xcolor}         

\usepackage{mathtools} 
\usepackage{booktabs} 
\usepackage{tikz} 

\usepackage{algorithm}
\usepackage{algorithmic}

\usepackage{microtype}
\usepackage{graphicx}
\usepackage{booktabs}

\usepackage{amsthm}
\usepackage{thmtools,thm-restate}

\usepackage{hyperref}
\usepackage{caption}
\usepackage{subcaption}
\usepackage{bbm}

\usepackage{amsmath}
\usepackage{amssymb}
\usepackage{mathtools}

\usepackage{mwe}
\usepackage{tikz} 
\usepackage{pgf}
\usepackage{bbm}
\usepackage{extarrows} 

\newcommand{\lcal}[0]{\mathcal{L}}

\usepackage{xspace}
\newcommand{\name}{\textsc{NAGL}\xspace}

\newcommand{\Ncl}[1]{N[#1]}

\newcommand{\sumv}[0]{\sum_{v \in V}}
\newcommand{\OPT}[0]{\mathrm{OPT}}

\DeclareMathOperator*{\argmax}{arg\,max}

\DeclareMathOperator{\tw}{tw}

\DeclareMathOperator{\dist}{dist}

\theoremstyle{plain}

\usepackage[textsize=tiny]{todonotes}

\begin{document}

\maketitle

\begin{abstract}
Motivated by optimization oracles in bandits with network interference, we study the Neighborhood-Aware Graph Labeling (\name) problem. Given a graph $G = (V,E)$, a label set of size $L$, and local reward functions $f_v$ accessed via evaluation oracles, the objective is to assign labels to maximize $\sum_{v \in V} f_v(x_{\Ncl{v}})$, where each term depends on the closed neighborhood of $v$.
Two vertices co-occur in some neighborhood term exactly when their distance in $G$ is at most $2$, so the dependency graph is the squared graph $G^2$ and $\tw(G^2)$ governs exact algorithms and matching fine-grained lower bounds.
Accordingly, we show that this dependence is inherent: \name is NP-hard even on star graphs with binary labels and, assuming SETH, admits no $(L-\varepsilon)^{\tw(G^2)}\cdot n^{O(1)}$-time algorithm for any $\varepsilon>0$.
We match this with an exact dynamic program on a tree decomposition of $G^2$ running in $O\!\left(n\cdot \tw(G^2)\cdot L^{\tw(G^2)+1}\right)$ time.
For approximation, unless $\mathsf{P}=\mathsf{NP}$, for every $\varepsilon>0$ there is no polynomial-time $n^{1-\varepsilon}$-approximation on general graphs even under the promise $\OPT>0$; 
without the promise $\OPT>0$, no finite multiplicative approximation ratio is possible.
In the nonnegative-reward regime, we give polynomial-time approximation algorithms for \name in two settings: (i) given a proper $q$-coloring of $G^2$, we obtain a $1/q$-approximation; and (ii) on planar graphs of bounded maximum degree, we develop a Baker-type polynomial-time approximation scheme (PTAS), which becomes an efficient PTAS (EPTAS) when $L$ is constant.
\end{abstract}

\section{Introduction}

A wide range of combinatorial optimization problems on graphs are characterized by the presence of \emph{local interactions} coupled with a \emph{global objective}.
In such problems, decisions made at individual vertices or edges affect only a small neighborhood, yet these local effects overlap and interact, making the global optimization problem highly nontrivial.
This phenomenon arises in numerous settings, including graph labeling problems, graphical models and constraint networks \cite{koller2009pgm,wainwright2008graphical,dechter1989treeclustering}, and graph-based energy minimization and labeling formulations \cite{boykov2001graphcuts,kolmogorov2004graphcuts}, as well as networked decision-making on graphs \cite{agarwal2024networkinterference, jamshidi2025graph}.
Understanding how local dependencies influence global computational complexity is a central theme in algorithmic graph theory.

In this work, we study the Neighborhood-Aware Graph Labeling (\name) problem.
The input is an undirected graph $G = (V, E)$ and a finite label set $\lcal$ with $L \coloneqq |\lcal|$.
Each vertex $v$ is assigned a label $x_v\in\lcal$ and receives a numerical reward that is a function of the labels in its closed neighborhood $\Ncl{v}$ (i.e., the neighborhood including the vertex itself).
The goal is to maximize the total reward $F(x)=\sum_{v\in V} f_v(x_{\Ncl{v}})$.
The computational difficulty of \name stems from the objective being a sum of \emph{overlapping} neighborhood terms: modifying a single label $x_u$ can change every local reward $f_v(x_{\Ncl{v}})$ for which $u\in \Ncl{v}$.
This overlap is captured by the squared graph $G^2$: two vertices can appear together in some local term $f_w$ exactly when their distance in $G$ is at most $2$.
In particular, each closed neighborhood $\Ncl{v}$ induces a clique in $G^2$, making $\tw(G^2)$ the appropriate treewidth parameter for exact dynamic programming.

The primary motivation for studying \name is to address the computational tractability of algorithms arising in recent work on multi-armed bandits with network interference \cite{interference1-jia2024multi, agarwal2024networkinterference, jamshidi2025graph}.
In this class of problems, each action corresponds to assigning labels (e.g., treatments or interventions) to all nodes of an interference graph, and the reward of each node depends not only on its own assigned label but also on the labels assigned to its neighbors.
Such models naturally arise, for example, in vaccination or information diffusion scenarios, where the treatment of an individual may affect the health outcomes or behavior of nearby individuals in a social or contact network \cite{elahi2025vacc}.
Learning algorithms for bandits with interference typically maintain estimates or upper confidence bounds on the local reward functions, and then assume access to an oracle that computes a globally optimal labeling given these local estimates.
See, e.g., \cite{jamshidi2025graph,agarwal2024networkinterference}.
This oracle problem is precisely an instance of \name. While the statistical aspects of learning under interference have received attention, the computational complexity of this underlying optimization problem has not been previously examined. Characterizing the hardness of this oracle and developing efficient algorithms are therefore crucial for the practical applicability of these learning methods.
Our results provide a tight characterization of the computational feasibility of this oracle.
More broadly, \name can be viewed as a graph labeling problem with higher-order (neighborhood) rewards, generalizing classical pairwise formulations used, e.g., in energy minimization via graph cuts \cite{boykov2001graphcuts,kolmogorov2004graphcuts} and inference in graphical models \cite{koller2009pgm,wainwright2008graphical}.

Equivalently, \name can be viewed as a weighted constrained satisfaction problem (WCSP) 
with primal graph $G^2$ \cite{dechter1989treeclustering,freuder1990ktree}, which provides intuition for why $\tw(G^2)$ governs the complexity of exact algorithms, via treewidth-based inference methods \cite{dechter1989treeclustering,lauritzen1988local}.
Our model, algorithms, and proofs are done in \name's graph-native language motivated by network interference optimization, giving self-contained arguments specialized to squared-neighborhood structure rather than using WCSP machinery.
This graph-first perspective is also what enables our approximation regimes both in terms of a coloring of the squared graph and a Baker-type PTAS/EPTAS for planar graphs, which exploit structure beyond $\tw(G^2)$ \cite{baker1994approximation}.

We further discuss the related work in \Cref{app:related_work},
deferred proofs appear in \Cref{apd:proofs}, and
experimental results are reported in \Cref{apd:experiments}.

\vspace{0.5\baselineskip}
\noindent
\textbf{Contributions.} Our main contributions are as follows:
\begin{itemize}
    \item We prove that \name is NP-hard under the standard evaluation-oracle model for local reward functions.
    We further show that, assuming SETH, there exists no algorithm running in time $(L-\varepsilon)^{\tw(G^2)}\cdot n^{O(1)}$ for any $\varepsilon>0$.
    We also establish strong inapproximability for the general version of \name (allowing negative rewards).
    In particular, unless $\mathsf{P}=\mathsf{NP}$, \name admits no polynomial-time algorithm with any bounded multiplicative approximation guarantee, even on star graphs.
    Moreover, unless $\mathsf{P}=\mathsf{NP}$, for every $\varepsilon>0$ there exists no polynomial-time $n^{1-\varepsilon}$-approximation algorithm for general graphs, even on instances with strictly positive optimum.
    \item We introduce \emph{Clique-Focused Dynamic Programming} (CFDP), an exact dynamic programming algorithm over a tree decomposition of $G^2$ with running time $O\bigl(n\cdot \tw(G^2)\cdot L^{\tw(G^2)+1}\bigr)$.
    For constant $L$ this yields a single-exponential fixed-parameter tractable (FPT) algorithm parameterized by $\tw(G^2)$.
    Our SETH-based lower bound proves that the base $L$ in the running time is optimal up to polynomial factors.
    \item In the nonnegative-reward regime, given a proper $q$-coloring of $G^2$ we obtain a $1/q$-approximation in time $O\!\left(L^{\Delta(G)+1}\right)\cdot n^{O(1)}$.
    Without a coloring as input, a greedy coloring yields a $1/(\Delta(G^2)+1)$-approximation, and hence a $1/(\Delta(G)^2+1)$-approximation using $\Delta(G^2)\le \Delta(G)^2$.
    \item In the nonnegative-reward regime, if $G$ is planar with $\Delta(G)$ a constant, we obtain a Baker-type PTAS for \name.
    For every $\varepsilon\in(0,1)$, we compute a $(1-\varepsilon)$-approximate labeling in time $L^{O(\Delta(G)^2/\varepsilon)}\cdot n^{O(1)}$.
    For constant $L$, this yields an EPTAS.

\end{itemize}

\section{Preliminaries and problem setup} \label{sec: setup}
We define \name and recall the graph-theoretic notions used throughout the paper.

We model interactions by an undirected graph $G=(V,E)$ with $n\coloneqq |V|$.
For $m\in\mathbb{N}$, we write $[m]\coloneqq \{1,2,\ldots,m\}$.
For each $v\in V$, let $\Ncl{v} \coloneqq \{ v \} \cup \{ u \in V : \{u,v\} \in E \}$ denote its closed neighborhood.
Let $\lcal$ be a finite set of labels and write $L \coloneqq |\lcal|$.
We view $L$ as an explicit parameter and keep track of its dependence in our running-time bounds.
Function $x: V \to \lcal$, called a labeling, specifies the labels assigned to all vertices in $G$.\footnote{Generalization of the results to finite label sets with vertex-dependent cardinalities is straightforward.}
For any subset $S \subseteq V$, we denote by $x_S$ the restriction of $x$ to the vertices in $S$.

For each vertex $v \in V$, the function $f_v : \lcal^{|\Ncl{v}|} \to \mathbb{R}$, called the reward function of vertex $v$, specifies the reward of this vertex as a function of the labels assigned to its closed neighborhood.
We assume an evaluation-oracle model in which each $f_v$ is given by a succinct representation supporting polynomial-time evaluation: given $v\in V$ and an assignment $a:\Ncl{v}\to\lcal$, we can compute $f_v(a)$ in time polynomial in the instance encoding size, and oracle values have polynomial-bit encoding, so the decision version of \name lies in NP.

While we allow rewards to be negative in general, our multiplicative approximation statements will be stated only under explicit nonnegativity assumptions (see \Cref{sec:color-approx} and \Cref{sec:planar-ptas}).

The objective of the \name problem is to find a labeling $x^\star \in \lcal^V$ that maximizes the total reward, that is,
\begin{equation} \label{eq:NAGL} 
    x^\star \in \argmax_{x \in \lcal^V} F(x), \qquad F(x) = \sumv f_v(x_{\Ncl{v}}). 
\end{equation}
We write $\OPT \coloneqq \max_{x\in\lcal^V} F(x)=F(x^\star)$ for the optimal value.

We study the computational complexity of solving \eqref{eq:NAGL} as a function of structural properties of the underlying graph $G$.
To keep the exposition self-contained, we briefly review the relevant graph-theoretic notions used throughout the paper.
\begin{definition}[Squared Graph] \label{def: squared graph}
Given an undirected graph $G = (V,E)$, the squared graph of $G$, denoted by $G^2$, is the graph on the same vertex set $V$ in which two distinct vertices $u,v \in V$ are connected by an edge if and only if their distance in $G$ is at most two.
\end{definition}

\begin{definition}[Tree Decomposition] \label{def: tree decomposition}
A tree decomposition of an undirected graph $G = (V,E)$ is a pair $(T, \{X_i\}_{i \in I})$, where $T$ is a tree and each node $i \in I$ is associated with a subset $X_i \subseteq V$, called a bag, such that the following conditions hold:
\begin{enumerate}
    \item For every vertex $v \in V$, there exists at least one bag $X_i$ such that $v \in X_i$.
    \item For every edge $\{u,v\} \in E$, there exists a bag $X_i$ such that $\{u,v\} \subseteq X_i$.
    \item For every vertex $v \in V$, the set of nodes $\{ i \in I : v \in X_i \}$ induces a connected subtree of $T$.
\end{enumerate}
\end{definition}

\begin{definition}[Treewidth] \label{def: treewidth}
The width of a tree decomposition $(T, \{X_i\}_{i \in I})$ is defined as $\max_{i \in I} |X_i| - 1$. The treewidth of a graph $G$, denoted by $\tw(G)$, is the minimum width over all tree decompositions of $G$.
\end{definition}

Treewidth is a graph parameter that quantifies how ``tree-like'' a graph is; for instance, every tree has treewidth $1$.
A large class of NP-hard problems becomes tractable on graphs of bounded treewidth via dynamic programming over a tree decomposition, with run time typically exponential in $\tw(G)$ and polynomial in $n$; prominent examples include \textsc{Maximum Independent Set} and \textsc{Max-Cut} \cite{cygan2015parameterized}.
Closely related notions also play a central role in probabilistic graphical models, where the complexity of exact inference scales exponentially with the treewidth \cite{koller2009pgm,wainwright2008graphical}.

For \name, although the objective \eqref{eq:NAGL} is defined over $G$, the relevant dependency structure is captured by the squared graph $G^2$.
To see the intuition behind this, observe that each term $f_v$ depends on all the labels in $\Ncl{v}$; consequently, any two distinct \textit{interacting} vertices are adjacent in $G^2$.
Motivated by this observation, we will express both algorithmic and hardness statements in terms of the label-set size $L$ and the parameter $\tw(G^2)$, which may be substantially larger than $\tw(G)$.

The following remark shows that the model extends naturally to settings in which each reward function depends on labels beyond the immediate neighborhood.

\begin{remark}
Suppose the reward function $f_v$ of a vertex $v$ depends on the labels of a set $D(v) \subseteq V$ (not necessarily equal to $\Ncl{v}$).
This can be captured as an instance of \name on an augmented graph $H$ on vertex set $V$ in which $v$ is adjacent to every vertex in $D(v)\setminus\{v\}$.
All statements in the paper apply to $H$ by replacing $G$ with $H$.
In particular, if each $f_v$ depends on all vertices within distance at most $k$ from $v$ in $G$,  $H = G^k$.
\end{remark}

\section{Computational hardness of \name} \label{sec:hardness}

In this section, we establish computational hardness results for \name under the evaluation-oracle input model described in \Cref{sec: setup}.
We give a reduction from $k$-\textsc{SAT} (and hence already from \textsc{3-SAT}) showing that \name is NP-hard even on star graphs.
This reduction yields a SETH-tight lower bound in terms of $\tw(G^2)$ and $L$ that we will later match with a dynamic programming algorithm in \Cref{sec:alg}.
We then complement these exact lower bounds with strong inapproximability.

\subsection{SETH-tight lower bounds via SAT on a star graph}
\label{subsec:hardness-sat}

\begin{definition}[Strong Exponential Time Hypothesis (SETH) \cite{Impagliazzo2001kSAT, lokshtanovMarxSaurabhSODA11}]
SETH asserts that for every $\delta>0$ there exists an integer $k\ge 3$ such that $k$-\textsc{SAT} on $N$ variables cannot be solved in time $(2-\delta)^N\cdot N^{O(1)}$.
\end{definition}

Assuming SETH, we establish a tight lower bound for \name parameterized by $\tw(G^2)$ and $L$ via a reduction from \textsc{$k$-SAT} to \name on a star graph, as stated in the following lemma.

\begin{lemma}[$k$-SAT to \name on a star]
\label{lem:sat-star}
Fix an integer $L\ge 2$.
Given a $k$-CNF formula $\varphi$ with $N$ variables, one can construct in polynomial time an instance of \name with label set $\lcal=[L]$ on a star graph $G$ with $t+1$ vertices, where $t\coloneqq \lceil N/\log_2 L\rceil$, such that all local rewards take values in $\{0,1\}$ and
the optimal value satisfies $\OPT=1$ if $\varphi$ is satisfiable and $\OPT=0$ otherwise.
Moreover, $G^2$ is a clique on $t+1$ vertices and hence $\tw(G^2)=t$.
\end{lemma}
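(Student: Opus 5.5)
We build a star with center $c$ and leaves $\ell_1,\dots,\ell_t$. The idea is to encode the $N$ Boolean variables of $\varphi$ into the leaf labels, let the center's reward check satisfiability, and give every leaf reward $0$.

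\textbf{Encoding the variables.} Let $b\coloneqq\lfloor \log_2 L\rfloor$. We partition the variables $x_1,\dots,x_N$ into $t$ consecutive blocks $B_1,\dots,B_t$, each of size at most $\log_2 L$.

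When $L$ is a power of two, $b=\log_2 L$ and $t=\lceil N/b\rceil$, so blocks have size at most $b$. Fix an injection $\iota:\{0,1\}^{b}\to[L]$ (a bijection here). A label $y\in[L]$ at leaf $\ell_j$ is decoded as the truth assignment $\iota^{-1}(y)$ restricted to the $|B_j|$ variables of $B_j$. Padding bits, if any, are ignored.

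For general $L$ the statement's $t=\lceil N/\log_2 L\rceil$ allows blocks of non-integer average size. In that case I would group leaves so that $m$ consecutive leaves jointly encode $\lfloor m\log_2 L\rfloor$ bits. This works through an injection $\{0,1\}^{\lfloor m\log_2 L\rfloor}\to[L]^m$, which exists since $2^{\lfloor m\log_2 L\rfloor}\le L^m$.

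The cleanest uniform choice is to treat the entire label vector of all leaves as a number in base $L$. The $L^t\ge 2^N$ label vectors $(y_1,\dots,y_t)\in[L]^t$ are mapped via the base-$L$ representation $\sum_j (y_j-1)L^{j-1}$ to integers in $[0,L^t)$. An integer below $2^N$ is decoded via its binary expansion to an assignment $\alpha\in\{0,1\}^N$; vectors whose value is at least $2^N$ are declared invalid. This decoding is computable in time polynomial in $N$ and $t\log L$, because it only involves arithmetic on $O(N)$-bit integers. Since $L$ is fixed, the instance encoding is polynomial.

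\textbf{Rewards.} Set $f_{\ell_j}\equiv 0$ for every leaf. The center reward $f_c(x_{\Ncl{c}})$ ignores $x_c$, decodes $(x_{\ell_1},\dots,x_{\ell_t})$ to $\alpha$, and outputs $1$ if the vector is valid and $\alpha$ satisfies $\varphi$, and $0$ otherwise. This is a succinct polynomial-time evaluable representation (the formula $\varphi$ together with the decoding rule), which matches the evaluation-oracle model of \Cref{sec: setup}. All values lie in $\{0,1\}$.

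\textbf{Correctness.}
\begin{itemize}
\item If $\varphi$ is satisfied by some $\alpha$, then encoding $\alpha$ as the base-$L$ digits of its integer value gives a labeling with $F=1$. Since $F\le 1$ always, $\OPT=1$.
\item If $\varphi$ is unsatisfiable, every labeling has $f_c=0$, so $\OPT=0$.
\end{itemize}

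\textbf{Structure of $G^2$.} Any two leaves are at distance $2$ through $c$, and each leaf is adjacent to $c$. Hence $G^2=K_{t+1}$. A clique on $t+1$ vertices has treewidth $t$: the single bag containing all vertices gives width $t$, and every tree decomposition must contain some bag with the whole clique (the standard Helly property of subtrees), so no decomposition has smaller width.

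\textbf{Main obstacle.} The only delicate point is the non-power-of-two case, that is, making $t=\lceil N/\log_2 L\rceil$ exact rather than $\lceil N/\lfloor\log_2 L\rfloor\rceil$. The global base-$L$ encoding resolves it, since $L^t\ge 2^N$ is exactly the defining inequality of $t$. It requires keeping decoding polynomial, which holds because $t\log_2 L< N+\log_2 L$.
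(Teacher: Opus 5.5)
Your proposal is correct and matches the paper's proof: the same star with zero leaf rewards, the same global base-$L$ encoding of the leaf labels (label vectors whose integer value is at least $2^N$ get reward $0$), and the same observation that $G^2=K_{t+1}$. Your preliminary block-encoding discussion is not needed, since the global base-$L$ encoding already handles every $L\ge 2$ uniformly, exactly as in the paper.
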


\begin{proof}
Let $t\coloneqq \lceil N/\log_2 L\rceil$, so that $L^t \ge 2^N$.
Let $G$ be the star with center $c$ and leaves $u_1,\ldots,u_t$.
For each leaf $u_i$, set $f_{u_i}\equiv 0$.
We define the center reward $f_c$ to ignore the label of $c$ and depend only on the leaf labels.

Fix a labeling $x\in \lcal^V$.
For each $i\in[t]$, map $x_{u_i}\in\{1,\ldots,L\}$ to the digit $d_i\coloneqq x_{u_i}-1\in\{0,\ldots,L-1\}$ and interpret $(d_1,\ldots,d_t)$ as the base-$L$ integer $I(x)\coloneqq \sum_{i=1}^t d_i\,L^{i-1}$.
If $I(x)\ge 2^N$, set $f_c(x_{\Ncl{c}})\coloneqq 0$.
Otherwise, let $\sigma(x)\in\{0,1\}^N$ be the $N$-bit binary representation of $I(x)$ (padded with leading zeros), and define
$f_c(x_{\Ncl{c}})\coloneqq 1$ iff $\varphi$ evaluates to true under the assignment $\sigma(x)$, and $0$ otherwise.
Crucially, this evaluation takes time polynomial in $|\varphi|$ (and in $t\log L$ to compute $I(x)$), hence polynomial in the instance size.

Since all leaves have reward $0$, we have $F(x)=f_c(x_{\Ncl{c}})\in\{0,1\}$ for every labeling $x$.
If $\varphi$ is satisfiable, let $\sigma^\star\in\{0,1\}^N$ be a satisfying assignment and let $I^\star<2^N$ be its integer value.
Because $L^t\ge 2^N$, $I^\star$ has a base-$L$ representation using at most $t$ digits, which yields labels for the leaves and hence a labeling $x$ with $I(x)=I^\star$.
For this labeling, $\sigma(x)=\sigma^\star$ and thus $F(x)=1$.
Conversely, if $\varphi$ is unsatisfiable then for every labeling $x$ we have $f_c(x_{\Ncl{c}})=0$, and hence $F(x)=0$.
Therefore $\OPT=1$ if and only if $\varphi$ is satisfiable.

Finally, any two distinct leaves are at distance $2$ in $G$, so $G^2$ is the clique $K_{t+1}$.
Hence $\tw(G^2)=\tw(K_{t+1})=t$.
\end{proof}

\begin{remark}[Why the oracle model matters]
Lemma~\ref{lem:sat-star} relies on the fact that each local reward function is accessed via a succinct representation that supports polynomial-time evaluation.
If $f_c$ were given explicitly as a table over all assignments to $\Ncl{c}$, then the input would have size $\Theta(L^{|\Ncl{c}|})=\Theta(L^{t+1})$, which is exponential in $N$ (since $L^t \ge 2^N$).
Thus, the reduction is polynomial-time only in the evaluation-oracle model.
\end{remark}

\begin{theorem}[NP-hardness on star graphs]
\label{thm:np-hard-star}
The \name problem is NP-hard even when the input graph $G$ is a star and the label set is binary.
\end{theorem}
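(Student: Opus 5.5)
We obtain the theorem by specializing Lemma~\ref{lem:sat-star} to $L=2$ and composing it with the NP-hardness of \textsc{3-SAT}. Given a 3-CNF formula $\varphi$ on $N$ variables, we apply Lemma~\ref{lem:sat-star} with $k=3$ and $L=2$. Then $\log_2 L=1$, so $t=N$, and the construction produces in polynomial time a \name instance on the star with center $c$ and leaves $u_1,\ldots,u_N$, using the binary label set $\lcal=\{1,2\}$. Here each leaf label directly encodes one bit of a truth assignment. Every base-$2$ integer formed from $N$ digits is below $2^N$, so the "$I(x)\ge 2^N$" branch never occurs.

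Next we confirm that the instance is a legitimate input in the evaluation-oracle model of \Cref{sec: setup}. The center reward $f_c$ can be evaluated in time polynomial in $|\varphi|$: it reads the leaf labels, forms the assignment, and evaluates $\varphi$ on it. The leaf rewards are identically $0$. All values lie in $\{0,1\}$, so they have constant bit length. Thus the reduction, including the succinct description of each $f_v$, runs in polynomial time.

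Finally, we state the hardness in decision form: ``is $\OPT\ge 1$?'' By Lemma~\ref{lem:sat-star}, $\OPT=1$ if $\varphi$ is satisfiable and $\OPT=0$ otherwise. So a polynomial-time algorithm computing an optimal labeling, or even just $\OPT$, on binary-labeled stars would decide \textsc{3-SAT} in polynomial time. This establishes NP-hardness.

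The only point needing care is the accounting in the oracle model. The reduction is polynomial only because $f_c$ is given succinctly rather than as a $2^{N+1}$-entry table, as noted in the preceding remark, so the hardness claim should be stated explicitly for that input model. Beyond this, the argument is a direct instantiation of the lemma and involves no real obstacle.
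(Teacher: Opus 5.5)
Your proposal is correct and follows the paper's proof: apply Lemma~\ref{lem:sat-star} with $L=2$ to a \textsc{3-SAT} instance and use that $\OPT=1$ iff $\varphi$ is satisfiable. Your additional observations ($t=N$, so the branch $I(x)\ge 2^N$ never fires, and the reduction is polynomial only in the evaluation-oracle model) are accurate.
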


\begin{proof}
We reduce from $3$-\textsc{SAT}, which is NP-hard.
Given an instance $\varphi$ with $N$ variables, apply \Cref{lem:sat-star} with $L=2$ to construct in polynomial time an \name instance on a star graph with binary labels.
By \Cref{lem:sat-star}, the constructed instance satisfies $\OPT=1$ if and only if $\varphi$ is satisfiable.
\end{proof}

\begin{theorem}[SETH-tight lower bound parameterized by $\tw(G^2)$]
\label{thm:seTH-tight}
Fix an integer $L\ge 2$.
Assuming SETH, for every $\varepsilon>0$ there is no algorithm that solves \name with label set size $L$ in time $(L-\varepsilon)^{\tw(G^2)}\cdot n^{O(1)}$.
This holds even when the input graph is restricted to be a star and all local rewards take values in $\{0,1\}$.
\end{theorem}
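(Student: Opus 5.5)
The argument is by contradiction: a fast \name algorithm, run on the instances of \Cref{lem:sat-star}, would give a $k$-\textsc{SAT} algorithm that violates SETH. Fix $L\ge 2$ and $\varepsilon>0$, and suppose some algorithm $\mathcal{A}$ solves \name with label set size $L$ in time $(L-\varepsilon)^{\tw(G^2)}\cdot n^{O(1)}$ on stars with $\{0,1\}$-valued rewards. We may assume $\varepsilon<L-1$, since a smaller $\varepsilon$ only weakens the hypothesis. Define $c\coloneqq \log_L(L-\varepsilon)<1$, so that $(L-\varepsilon)^{t}=L^{ct}$.

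Given a $k$-CNF formula $\varphi$ on $N$ variables, apply \Cref{lem:sat-star} to obtain, in polynomial time, a star instance with $n=t+1$ vertices, where $t=\lceil N/\log_2 L\rceil$ and $\tw(G^2)=t$. Run $\mathcal{A}$ and accept exactly when it reports $\OPT=1$; by \Cref{lem:sat-star} this decides satisfiability of $\varphi$. Because oracle evaluations are polynomial in $|\varphi|$, the polynomial factors are polynomial in $N$ and $|\varphi|$, and we may assume $|\varphi|\le N^{O(k)}$ for fixed $k$.

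The running time is then bounded as follows:
\[
(L-\varepsilon)^{t}\cdot N^{O(1)} = L^{c\lceil N/\log_2 L\rceil}\cdot N^{O(1)} \le L^{c}\cdot L^{cN/\log_2 L}\cdot N^{O(1)} = 2^{cN}\cdot N^{O(1)}.
\]
Here we used $L^{N/\log_2 L}=2^N$, and $L^{c}$ is a constant. Setting $\delta\coloneqq 2-2^{c}>0$, this is $(2-\delta)^N\cdot N^{O(1)}$. The constant $\delta$ depends only on $L$ and $\varepsilon$, not on $k$.

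SETH supplies some $k$ for which $k$-\textsc{SAT} has no $(2-\delta)^N N^{O(1)}$ algorithm. Applying the procedure above with that $k$ gives exactly such an algorithm, which is a contradiction.

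The main obstacle is the bookkeeping, not any new construction. The ceiling in $t$ costs only a constant factor $L^{c}$. The $n^{O(1)}$ factor is harmless because $n=t+1=O(N)$. The oracle-evaluation cost must also be absorbed into $N^{O(1)}$, which holds because each evaluation of $f_c$ runs in time polynomial in $|\varphi|$. Finally, $\delta$ has to be fixed before invoking SETH to obtain $k$, so that $\delta$ is independent of $k$.
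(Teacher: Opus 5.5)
Your proposal is correct and follows essentially the same route as the paper. You reduce via \Cref{lem:sat-star}, fix $\delta = 2 - 2^{\log_L(L-\varepsilon)}$ before invoking SETH to obtain $k$, and absorb the ceiling in $t$ and the $n^{O(1)}$ factor into $N^{O(1)}$. The differences are cosmetic: you dispose of the case $\varepsilon \ge L-1$ by monotonicity rather than by the paper's ``polynomial time contradicts SETH'' remark, and you make the oracle-cost and input-size accounting ($|\varphi| \le N^{O(k)}$) explicit.
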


\begin{proof}
Fix $L\ge 2$ and $\varepsilon>0$.
If $L-\varepsilon \le 1$ (equivalently, $\varepsilon \ge L-1$), then the claimed running time bound is polynomial in $n$; combined with \Cref{lem:sat-star} this would contradict SETH.
Hence we may assume $1 < L-\varepsilon < L$.

Assume, for contradiction, that there is an algorithm $\mathcal{A}$ that solves \name in time
$(L-\varepsilon)^{\tw(G^2)}\cdot n^{O(1)}$.
Define $\gamma\coloneqq \frac{\log_2(L-\varepsilon)}{\log_2 L}$,
so $0<\gamma<1$.
Let $(2-\delta)\coloneqq 2^{\gamma}$, i.e., $\delta\coloneqq 2-2^{\gamma}>0$.
By SETH, there exists an integer $k\ge 3$ such that $k$-\textsc{SAT} on $N$ variables cannot be solved in time $(2-\delta)^N\cdot N^{O(1)}$.

Let $k$ be this integer, and let $\varphi$ be an instance of $k$-\textsc{SAT} with $N$ variables.
Construct the corresponding \name instance via \Cref{lem:sat-star}.
The constructed graph is a star and satisfies $\tw(G^2)=t$ where $t=\lceil N/\log_2 L\rceil$.
Moreover, the instance has $n=t+1=O(N)$ vertices, so $n^{O(1)} \le N^{O(1)}$.
Running $\mathcal{A}$ on this instance decides satisfiability of $\varphi$ in time
\begin{align*}
    (L-\varepsilon)^t\cdot n^{O(1)}
    &\le
    (L-\varepsilon)^{N/\log_2 L + 1}\cdot N^{O(1)} \\
    &=
    (L-\varepsilon)\cdot 2^{\gamma N}\cdot N^{O(1)} \\
    &=
    2^{\gamma N}\cdot N^{O(1)}
    \qquad\text{(absorbing constants into $N^{O(1)}$)}
    =
    (2-\delta)^N\cdot N^{O(1)}.
\end{align*}
This contradicts SETH.
\end{proof}

\subsection{Inapproximability} \label{subsec:inapprox}
We next establish hardness of approximation for \name.
We use the standard multiplicative approximation ratio for maximization problems with nonnegative objective values (all hardness instances in this section have rewards in $\{0,1\}$).
Given a function $\rho:\mathbb{N}\to[1,\infty)$, a polynomial-time $\rho(n)$-approximation algorithm is required to output, on every instance on $n$ vertices, a labeling $x$ such that $F(x)\ge \OPT/\rho(n)$.

\begin{lemma}[Gap-at-zero barrier for multiplicative approximation]
\label{lem:gap-zero}
Unless $\mathsf{P}=\mathsf{NP}$, for every function $\rho:\mathbb{N}\to[1,\infty)$ there is no polynomial-time $\rho(n)$-approximation algorithm for \name, even when $G$ is a star and all local rewards take values in $\{0,1\}$.
\end{lemma}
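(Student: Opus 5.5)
We prove this by contradiction, using the zero-versus-one gap already built into \Cref{lem:sat-star}. Fix an arbitrary function $\rho:\mathbb{N}\to[1,\infty)$ and suppose $\mathcal{A}$ is a polynomial-time $\rho(n)$-approximation algorithm for \name. We will use $\mathcal{A}$ to decide \textsc{3-SAT} in polynomial time, which gives $\mathsf{P}=\mathsf{NP}$.

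Given a 3-CNF formula $\varphi$ on $N$ variables, we apply \Cref{lem:sat-star} with $L=2$. This produces, in polynomial time, an \name instance on a star with $n=N+1$ vertices. All local rewards lie in $\{0,1\}$, and $\OPT=1$ if $\varphi$ is satisfiable, while $\OPT=0$ otherwise. We run $\mathcal{A}$ on this instance to obtain a labeling $x$, and then evaluate $F(x)$. This takes polynomially many oracle calls, one per vertex, each polynomial-time by the evaluation-oracle model. We answer ``satisfiable'' if and only if $F(x)\ge 1$; since $F(x)\in\{0,1\}$, this is the same as $F(x)>0$.

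For correctness, suppose first that $\varphi$ is satisfiable. The approximation guarantee gives $F(x)\ge \OPT/\rho(n)=1/\rho(n)>0$, which is strictly positive because $\rho(n)$ is a finite real number. Since $F(x)\in\{0,1\}$, it follows that $F(x)=1$. Now suppose $\varphi$ is unsatisfiable. Then every labeling has value $0$, so $F(x)=0$. Hence the procedure decides \textsc{3-SAT} in polynomial time, and this holds whatever $\rho$ is.

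There is essentially no technical obstacle. The one point to verify is that the decision rule depends on $\rho$ only through its finiteness. We never need to compute $\rho(n)$, so the argument applies even to non-computable $\rho$. We also rely on the hardness instances having $\OPT\in\{0,1\}$, which makes the multiplicative ratio well-defined as stated.
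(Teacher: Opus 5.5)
Your proof is correct and is essentially the paper's argument: build the star instance from \Cref{lem:sat-star}, run the approximation algorithm, and note that on satisfiable formulas $F(\hat x)\ge 1/\rho(n)>0$ together with $F\in\{0,1\}$ forces $F(\hat x)=1$, while unsatisfiable formulas give value $0$. Your extra details (fixing $L=2$ so that $n=N+1$, and remarking that the decision rule never needs to compute $\rho$) are consistent with the paper and do not change the route.
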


\begin{proof}
Using \Cref{lem:sat-star}, a multiplicative approximation would distinguish $\OPT=0$ from $\OPT=1$ and decide satisfiability; see \Cref{apd:inapprox-proofs}.
\end{proof}

The previous lemma is a gap-at-zero barrier: without a promise that $\OPT>0$, no algorithm can guarantee any finite multiplicative ratio.
To obtain an inapproximability statement that remains informative under the promise $\OPT>0$, we next reduce from the \textsc{Maximum Independent Set} (\textsc{MIS}) problem.
Given a graph $G=(V,E)$, a set $I\subseteq V$ is independent if it contains no edge of $G$, and we write $\alpha(G)$ for the maximum size of an independent set.
The reduction below produces \name instances with $\OPT=\alpha(G)$, and hence $\OPT\ge 1$ whenever $|V|\ge 1$.

\begin{theorem}[$n^{1-\varepsilon}$-inapproximability of \name]
\label{thm:inapprox}
Unless $\mathsf{P}=\mathsf{NP}$, for every $\varepsilon>0$ there is no polynomial-time $n^{1-\varepsilon}$-approximation algorithm for \name.
This holds even on instances with $L=2$ and $\OPT > 0$.
\end{theorem}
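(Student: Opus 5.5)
We reduce from \textsc{Maximum Independent Set}, using H\aa stad's and Zuckerman's theorem: unless $\mathsf{P}=\mathsf{NP}$, for every $\varepsilon>0$ there is no polynomial-time $n^{1-\varepsilon}$-approximation for $\alpha(H)$ on $n$-vertex graphs. Given an MIS instance $H=(V,E_H)$, we build a \name instance on the \emph{same} graph $G\coloneqq H$ with $n=|V|$ vertices and binary labels $\lcal=\{0,1\}$. Label $1$ means ``$v$ is in the set''. Each local reward is
\[
f_v(x_{\Ncl{v}}) \coloneqq \begin{cases} 1 & \text{if } x_v=1 \text{ and } x_u=0 \text{ for all } u\in \Ncl{v}\setminus\{v\},\\ 0 & \text{otherwise.}\end{cases}
\]
This is clearly evaluable in polynomial time, and all rewards lie in $\{0,1\}$.

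The key claim is $\OPT=\alpha(H)$, together with a polynomial-time map from any labeling $x$ to an independent set of size at least $F(x)$.

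\emph{Lower bound.} Take a maximum independent set $I$ and label it with $1$s and everything else with $0$. Every $v\in I$ has all neighbors outside $I$, so it contributes $1$, giving $F(x)=|I|$.

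\emph{Upper bound and rounding.} For an arbitrary $x$, let $S(x)=\{v: f_v(x_{\Ncl{v}})=1\}$, so $F(x)=|S(x)|$. If $u,v\in S(x)$ were adjacent, then $x_u=1$, but $f_v=1$ forces $x_u=0$, a contradiction. Hence $S(x)$ is independent and $F(x)\le\alpha(H)$. Since $S(x)$ is computable in polynomial time, the rounding is immediate. Also $\OPT=\alpha(H)\ge 1$ whenever $V\neq\emptyset$, so the promise $\OPT>0$ holds.

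With the claim in hand, the contradiction is direct. Suppose a polynomial-time $n^{1-\varepsilon}$-approximation for \name existed. Run it on the instance to get $x$ with $F(x)\ge \alpha(H)/n^{1-\varepsilon}$. Then $S(x)$ is an independent set of size at least $\alpha(H)/n^{1-\varepsilon}$. Because the vertex count is preserved exactly, this is an $n^{1-\varepsilon}$-approximation for MIS, contradicting the cited hardness.

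The main point needing care is this exact preservation of $n$, since it is what transfers the exponent $1-\varepsilon$ with no loss. Our construction uses the same graph, so this is automatic. The only other subtlety is that the MIS hardness must be invoked in its search or value form, which is standard. An alternative, if one prefers decision gaps, is to use the gap version (distinguishing $\alpha\ge n^{1-\varepsilon}$ from $\alpha\le n^{\varepsilon}$) and argue through the value $F(x)$ directly.
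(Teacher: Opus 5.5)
Your proposal is correct and is essentially the paper's proof. The paper uses the same reduction from \textsc{MIS} on the same graph, with $f_v=1$ iff $v$ is labeled ``in'' and all its neighbors ``out''. This gives $F(x)=|I(x)|$ with $I(x)$ independent, $\OPT=\alpha(G)\ge 1$, and a lossless transfer of the $n^{1-\varepsilon}$ ratio via Zuckerman's hardness, since the vertex count is preserved; the only cosmetic difference is that the paper uses labels $\{1,2\}$ rather than $\{0,1\}$.
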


\begin{proof}
We reduce from \textsc{MIS} by encoding independent sets as labelings with $F(x)=|I(x)|$; see \Cref{apd:inapprox-proofs}.
\end{proof}

\subsection{On the parameter $\tw(G^2)$}
\label{subsec:twG2}

Our SETH-based lower bound rules out algorithms running in time $(L-\varepsilon)^{\tw(G^2)}\cdot n^{O(1)}$ (for constant $L$), and in \Cref{sec:alg} we give a matching dynamic program.
We briefly relate $\tw(G^2)$ to more standard structural parameters of $G$.

A first takeaway is that bounding $\tw(G)$ alone does not control $\tw(G^2)$.
For instance, if $G$ is a star on $n$ vertices, then $\tw(G)=1$, but $G^2$ is the clique $K_n$ and thus $\tw(G^2)=n-1$.
This is consistent with \Cref{thm:np-hard-star}, which shows that \name remains NP-hard even on star graphs.
Bounding the maximum degree $\Delta(G)$ alone is also insufficient: there are bounded-degree graph families with $\tw(G)=\Omega(n)$ (e.g., bounded-degree expanders), and since $G$ is a subgraph of $G^2$ we have $\tw(G^2)\ge \tw(G)=\Omega(n)$ \cite{tw-small-delta-grohe2009tree, tw-small-delta2-feige2016giant}.

On the positive side, $\tw(G^2)$ can be bounded in terms of $\tw(G)$ and $\Delta(G)$.
A standard bag-expansion argument gives $\tw(G^2)\le (\tw(G)+1)\bigl(\Delta(G^2)+1\bigr)-1 \le (\tw(G)+1)\bigl(1+\Delta(G)^2\bigr)-1$, where we used $\Delta(G^2)\le \Delta(G)^2$; see, e.g., the discussion on the treewidth of the graph powers in \cite{power-treewidth-gurski2025behavior}.


\section{Algorithm and upper bound}
\label{sec:alg}

We give an exact algorithm for \name by dynamic programming over a tree decomposition of the squared graph $G^2$.
The algorithm assumes that a width-$t$ tree decomposition of $G^2$ is given as part of the input; computing such a decomposition can be done using standard algorithms with an additional $f(t)\,n^{O(1)}$ overhead \cite{Bodlaender1996LinearTimeTreewidth}.
Since $\Ncl{v}$ is a clique in $G^2$ for every $v\in V$, each local reward $f_v(x_{\Ncl{v}})$ can be evaluated inside a single bag.
Therefore, the relevant parameter is $\tw(G^2)$.

\subsection{Preliminaries: nice tree decompositions and cliques}
\label{subsec:alg-prelim}

Define $H\coloneqq G^2$.
We use a rooted \emph{nice} tree decomposition $(T,\{X_i\}_{i\in I})$ of $H$ with root $r$ and empty root bag, of width $t$, as in \cite[Ch.~7.2]{cygan2015parameterized}.

\begin{definition}[Nice tree decomposition]
\label{def:nice-td}
Let $H$ be an undirected graph.
A \emph{nice tree decomposition} of $H$ is a tree decomposition $(T,\{X_i\}_{i\in I})$ together with a root $r\in I$ such that $X_r=\emptyset$ and every leaf has empty bag.
Every node $i\in I$ is of exactly one of the following types.
\begin{itemize}
    \item \textbf{Leaf:} $i$ has no children and $X_i=\emptyset$.
    \item \textbf{Introduce:} $i$ has a single child $j$ and $X_i = X_j \cup \{a\}$ for some $a\notin X_j$.
    \item \textbf{Forget:} $i$ has a single child $j$ and $X_i = X_j \setminus \{a\}$ for some $a\in X_j$.
    \item \textbf{Join:} $i$ has exactly two children $j_1,j_2$ and $X_i = X_{j_1} = X_{j_2}$.
\end{itemize}
\end{definition}

It is standard that any width-$t$ tree decomposition can be converted in polynomial time into a nice tree decomposition of the same width and with $|I|=O(n\cdot t)$ nodes (see Ch. 7.2 of \cite{cygan2015parameterized}).
In particular, $|X_i|\le t+1$ for all $i\in I$.

\begin{lemma}[Cliques in bags \cite{Bodlaender2005DiscoveringTreewidth}]
\label{lem:clique-in-bag}
Let $(T,\{X_i\}_{i\in I})$ be a tree decomposition of $H$.
For every clique $W$ in $H$ there exists $i\in I$ such that $W\subseteq X_i$.
\end{lemma}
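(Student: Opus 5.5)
We prove \Cref{lem:clique-in-bag} with the classical Helly property for subtrees of a tree. For each vertex $w\in W$ let $T_w\coloneqq\{i\in I: w\in X_i\}$. By conditions~1 and~3 of \Cref{def: tree decomposition}, $T_w$ is a nonempty connected subtree of $T$. Since $W$ is a clique, condition~2 gives, for any two distinct $w,w'\in W$, a bag containing both, so $T_w\cap T_{w'}\neq\emptyset$. It then suffices to show that a finite family of pairwise intersecting subtrees of a tree has a common node. Any node $i$ in $\bigcap_{w\in W}T_w$ satisfies $W\subseteq X_i$.

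We prove the Helly property by induction on the number $m$ of subtrees. The cases $m\le 2$ hold by assumption. For $m=3$, take subtrees $A,B,C$ and pick $a\in B\cap C$, $b\in A\cap C$ and $c\in A\cap B$. In the tree $T$ the three paths $a$--$b$, $b$--$c$ and $c$--$a$ share a common node, namely the median $\mu$ of $a,b,c$. The path $b$--$c$ lies in $A$ because $A$ is connected and contains both endpoints, so $\mu\in A$. By the same argument $\mu\in B$ (from the path $c$--$a$) and $\mu\in C$ (from the path $a$--$b$).

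For $m>3$, replace $T_1$ and $T_2$ by $S\coloneqq T_1\cap T_2$. This set is a subtree, since the intersection of two subtrees of a tree is connected, and it is nonempty by assumption. For each $k\ge 3$, applying the three-subtree case to $T_1,T_2,T_k$ shows $S\cap T_k\neq\emptyset$. The new family therefore has $m-1$ pairwise intersecting subtrees, and induction gives a common node.

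The main step needing care is the three-subtree case, which relies on the existence of the median in a tree. The median exists because the unique paths between $a$, $b$ and $c$ meet at one branching node, or at one of the endpoints in degenerate cases.

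An alternative route avoids Helly entirely. Root $T$, and for each $w$ let $r_w$ be the root of $T_w$, meaning its node closest to the root of $T$. Choose $w^\ast$ whose $r_{w^\ast}$ is deepest. For any other $w$, the intersection $T_w\cap T_{w^\ast}$ is nonempty and lies in the subtree of $T$ below $r_{w^\ast}$, so $T_w$ contains a descendant of $r_{w^\ast}$. Also $r_w$ is an ancestor of $r_{w^\ast}$: it is an ancestor of every node of $T_w$, hence of a node below $r_{w^\ast}$, and it is no deeper than $r_{w^\ast}$. By connectivity, $T_w$ contains the path between these two nodes, which passes through $r_{w^\ast}$. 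Thus $r_{w^\ast}$ is a common node, and its bag contains $W$.
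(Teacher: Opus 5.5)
The paper gives no proof of this lemma. It is quoted from Bodlaender and used as a black box to place each $\Ncl{v}$ in a single bag. Your proposal therefore supplies an argument the paper outsources, and both of your arguments are correct.

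The two routes trade generality for brevity:
\begin{itemize}
\item The Helly route proves the stronger fact that any finite pairwise-intersecting family of subtrees of a tree has a common node. All the work sits in the three-subtree median step. Your reduction is sound, because $T_1\cap T_2$ is again a nonempty subtree and meets every $T_k$ by the $m=3$ case.
\item The rooted argument avoids both the median lemma and the induction, and it exhibits an explicit witness: the top node $r_{w^\ast}$ of the deepest-rooted subtree.
\end{itemize}

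The rooted argument does use two facts you do not prove: that $r_w$ is well defined, and that all of $T_w$ lies below it. Both follow from connectivity in one line. For $x\in T_w$, the least common ancestor of $r_w$ and $x$ lies on the path between them, hence in $T_w$. It is an ancestor of $r_w$, so by the minimality of the depth of $r_w$ within $T_w$ it equals $r_w$. The same reasoning gives uniqueness of $r_w$. It is worth stating this explicitly, since the rest of that argument relies on it.
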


For every $v\in V$, the set $\Ncl{v}$ is a clique in $H$ since any two vertices in $\Ncl{v}$ have distance at most $2$ in $G$.
Therefore, by \Cref{lem:clique-in-bag}, for each $v$ there exists a bag $X_i$ with $\Ncl{v}\subseteq X_i$, so $f_v(x_{\Ncl{v}})$ can be evaluated from a single bag-labeling.

\subsection{Exact dynamic programming on $G^2$}
\label{subsec:cfdp}
We name the algorithm CFDP (Clique-Focused Dynamic Programming) to reflect the fact that it exploits the clique structure of the neighborhoods and the corresponding property of tree decompositions. Below we give a description of the algorithm; the correctness proof is deferred to \Cref{apd:proof-cfdp}.

\paragraph*{Assigning reward terms to bags}
Fix any mapping $\beta:V\to I$ such that $\Ncl{v}\subseteq X_{\beta(v)}$ for all $v\in V$.
Such a mapping can be constructed in polynomial time from $(T,\{X_i\}_{i\in I})$.
For each $i\in I$, let $V_i\coloneqq \{v\in V : \beta(v)=i\}$.
For a labeling $\ell:X_i\to\lcal$, define $\Phi_i(\ell) \coloneqq \sum_{v\in V_i} f_v\!\bigl(\ell_{\Ncl{v}}\bigr)$.
Then for every global labeling $x\in\lcal^V$ we have
\begin{equation}
\label{eq:cfdp-decomposition}
F(x) = \sum_{i\in I} \Phi_i\bigl(x_{X_i}\bigr).
\end{equation}

\paragraph*{Dynamic program}

For a node $i\in I$, let $T_i$ denote the set of nodes in the subtree of $T$ rooted at $i$, and let $U_i\coloneqq \bigcup_{j\in T_i} X_j$ be the set of vertices of $H$ appearing in bags of this subtree.
For a bag-labeling $\ell:X_i\to\lcal$, define $\mathrm{DP}_i(\ell) \coloneqq
\max\left\{\sum_{j\in T_i} \Phi_j\bigl(x_{X_j}\bigr) : x\in\lcal^{U_i},\; x_{X_i}=\ell\right\}$.
We process the decomposition bottom-up.
For each node type, the recurrence is as follows:
\[
\mathrm{DP}_i(\ell)
=
\begin{cases}
0, & \text{if $i$ is a leaf (so $X_i=\emptyset$ and $\ell=\emptyset$),}\\
\mathrm{DP}_j\bigl(\ell_{X_j}\bigr) + \Phi_i(\ell), & \text{if $i$ introduces $a$ and $X_i=X_j\cup\{a\}$,}\\
\max_{c\in\lcal}\;\mathrm{DP}_j\bigl(\ell\cup\{a\mapsto c\}\bigr) + \Phi_i(\ell), & \text{if $i$ forgets $a$ and $X_i=X_j\setminus\{a\}$,}\\
\mathrm{DP}_{j_1}(\ell) + \mathrm{DP}_{j_2}(\ell) + \Phi_i(\ell), & \text{if $i$ is a join node and $X_i=X_{j_1}=X_{j_2}$.}
\end{cases}
\]
Since the root bag is empty, the optimal value equals $\mathrm{DP}_r(\emptyset)$.
An optimal labeling can be recovered by standard backpointers at forget nodes.

\begin{theorem}
\label{thm:cfdp-correct}
There is an algorithm (CFDP) that, given an instance of \name on a graph $G=(V,E)$ with label set $\lcal$ ($L = |\lcal|$) and local reward functions $\{f_v\}_{v\in V}$, and a rooted nice tree decomposition $(T,\{X_i\}_{i\in I})$ of $H\coloneqq G^2$ of width $t$, computes a labeling $x^\star\in\lcal^V$ maximizing $F(x)$.
The running time is $O\bigl(n\cdot t\cdot L^{t+1}\bigr)$.
The space usage is $O(|I|\cdot L^{t+1})$.
If only the optimal value is required, the tables can be computed in a postorder traversal while discarding children after use, using $O(h\cdot L^{t+1})$ working space, where $h$ is the height of $T$.
\end{theorem}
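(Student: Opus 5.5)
The proof has three parts: correctness of the recurrence, complexity, and recovery of an optimal labeling. The first step is to justify the decomposition \eqref{eq:cfdp-decomposition}. The sets $V_i$ partition $V$, because $\beta$ is a function. By \Cref{lem:clique-in-bag}, a bag $X_{\beta(v)}\supseteq\Ncl{v}$ exists, so $\Phi_i(x_{X_i})$ is well defined. Summing over $i$ then gives $F(x)$. The mapping $\beta$ can be found in polynomial time by checking, for each $v$, every bag for containment of $\Ncl{v}$.

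Correctness is by induction over $T$ in postorder, proving the invariant that $\mathrm{DP}_i(\ell)$ equals the defined maximum. Two standard structural facts from the third tree-decomposition axiom drive every case.
\begin{itemize}
\item At an introduce node for $a$, the vertex $a\notin U_j$. Otherwise $a$ would appear in some bag below $j$ and in $X_i$ but not in $X_j$, which would disconnect its occurrence subtree.
\item At a join node, $U_{j_1}\cap U_{j_2}=X_i$.
\end{itemize}

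With these facts the cases go as follows.
\begin{itemize}
\item \textbf{Leaf.} This case is trivial.
\item \textbf{Introduce.} Every $x\in\lcal^{U_i}$ with $x_{X_i}=\ell$ is exactly an extension of $\ell_{X_j}$ on $U_j$ together with the fixed value of $a$. The sum splits as $\sum_{T_j}+\Phi_i(\ell)$, and $\Phi_j$ for $j\in T_j$ does not depend on $a$.
\item \textbf{Forget.} Here $U_i=U_j$. Maximizing over the free label $c$ of $a$ reproduces the definition.
\item \textbf{Join.} Labelings on $U_i$ that agree with $\ell$ decompose into independent labelings on $U_{j_1}$ and $U_{j_2}$ that share only $X_i$. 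The maximum of the sum is therefore the sum of the maxima, plus $\Phi_i(\ell)$.
\end{itemize}
At the root, $U_r=V$ and $T_r=I$, so $\mathrm{DP}_r(\emptyset)=\max_x F(x)=\OPT$ by \eqref{eq:cfdp-decomposition}. For recovery, traverse $T$ top-down. At each forget node, fix the argmax $c$ stored as a backpointer. Consistency holds because each vertex is forgotten exactly once, as its occurrence set is a connected subtree below the empty root.

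For the running time, each node has at most $L^{t+1}$ table entries. Introduce and join entries take $O(1)$ table operations, and forget entries take $O(L)$ operations. Forget bags have size at most $t$, so a forget node costs $L^{t}\cdot L=L^{t+1}$ in total. There are $|I|=O(nt)$ nodes, which gives $O(nt\,L^{t+1})$ table work.

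I expect the main obstacle to be the cost of the $\Phi_i$ evaluations within this budget. The $\Phi_i$ contributions total $\sum_i |V_i|\cdot L^{t+1}=n\,L^{t+1}$ oracle calls, because the $V_i$ partition $V$. Treating an oracle call and a restriction $\ell_{X_j}$ as unit cost, this stays within the bound. Restrictions can be made $O(1)$ amortized by indexing bag labelings in mixed radix.

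For space, storing all tables uses $O(|I|L^{t+1})$. If only the value is needed, a postorder evaluation keeps only the tables of nodes whose parent is not yet processed. At any moment these nodes lie on or adjacent to the current root-to-node path, with at most one pending sibling table per join level, giving $O(h\,L^{t+1})$ working space.
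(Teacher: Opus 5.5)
Your proposal is correct and follows the paper's proof essentially step for step. It uses the same bag assignment $\beta$ and decomposition \eqref{eq:cfdp-decomposition}, and the same induction relying on $a\notin U_j$ at introduce nodes and $U_{j_1}\cap U_{j_2}=X_i$ at join nodes. It also matches on recovery (backpointers at forget nodes) and on the $O(|I|\cdot L^{t+1})$ accounting, including the $n\,L^{t+1}$ oracle calls for the $\Phi_i$ via the partition $\{V_i\}$. Your forget-node count ($L^{t}$ parent entries times $L$) is equivalent to the paper's single scan over child states, and your observation that each vertex is forgotten exactly once justifies backpointer consistency a bit more explicitly than the paper does.
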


\paragraph*{Running time intuition.}
Each bag has size at most $t+1$, hence there are at most $L^{t+1}$ states per node.
A standard implementation processes each node in $O(L^{t+1})$ time (including forget nodes via a single scan over child states) and a nice decomposition has $O(n\cdot t)$ nodes \cite[Ch.~7.2]{cygan2015parameterized}, yielding $O(n\cdot t\cdot L^{t+1})$ total time, up to polynomial factors in the input representation.
Derivations of the stated bounds and further implementation details appear in \Cref{subsec:cfdp-complexity}.

\subsection{Upper bound and tightness}
\label{subsec:alg-tight}

Given a width-$t$ tree decomposition of $H=G^2$, CFDP computes an optimal labeling in time $O\bigl(n\cdot t\cdot L^{t+1}\bigr)$.
In particular, taking $t=\tw(G^2)$ yields an algorithm running in time $O\bigl(n\cdot \tw(G^2)\cdot L^{\tw(G^2)+1}\bigr)$.
For binary labels $L=2$ this is $2^{O(\tw(G^2))}\cdot n^{O(1)}$, i.e., a single-exponential FPT algorithm parameterized by $\tw(G^2)$.
For constant $L$ this is $L^{O(\tw(G^2))}\cdot n^{O(1)}$.

\begin{remark}[Tightness under SETH]
Fix an integer $L\ge 2$.
By \Cref{thm:seTH-tight}, assuming SETH there is no algorithm for \name running in time $(L-\varepsilon)^{\tw(G^2)}\cdot n^{O(1)}$ for any $\varepsilon>0$.
Together with the $O\bigl(n\cdot \tw(G^2)\cdot L^{\tw(G^2)+1}\bigr)$ running time of CFDP, this shows that the base $L$ in the exponential dependence on $\tw(G^2)$ is optimal under SETH up to polynomial factors.
\end{remark}

\section{A coloring-based approximation for sparse graphs}
\label{sec:color-approx}
The inapproximability results of \Cref{subsec:inapprox} show that \name is NP-hard and hard to approximate on general graphs.
In many interference-bandit applications, reward values are nonnegative (often in $[0,1]$) and the interference graph is sparse, so each neighborhood $\Ncl{v}$ is small \cite{agarwal2024networkinterference,interference1-jia2024multi,jamshidi2025graph}.
Under such nonnegativity assumptions (and, when stating multiplicative ratios, assuming $\OPT>0$ so the guarantee is nonvacuous), meaningful approximation becomes possible.
In this section we give an approximation algorithm whose approximation ratio depends only on a coloring of the squared graph $G^2$.
In \Cref{sec:planar-ptas} we obtain a Baker-type PTAS for planar graphs.
Finally, \Cref{apd:submod} treats the budgeted monotone submodular variant.

Throughout this section, we assume that $f_v(\cdot) \ge 0$ for all $v \in V$. The following lemma states a key property of the color classes induced by a coloring of the squared graph $G^2$, which plays a central role in our algorithm.

\begin{lemma}[Disjoint neighborhoods within a $G^2$-independent set]
\label{lem:disjoint-neighborhoods}
Let $u,v\in V$ be distinct vertices. i.e.,  $\{u,v\}\notin E(G^2)$ (equivalently, $\dist_G(u,v)\ge 3$).
Then $\Ncl{u}\cap \Ncl{v}=\emptyset$.
Consequently, for any proper coloring $\psi$ of $G^2$, the sets $\{\Ncl{v}: \psi(v)=c\}$ are pairwise disjoint for every color $c$.
\end{lemma}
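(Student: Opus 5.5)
The plan is a direct contrapositive argument for the first claim, followed by an immediate application to colour classes.

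For the first claim, suppose $\Ncl{u}\cap\Ncl{v}\neq\emptyset$ and pick some $w$ in the intersection. By the definition of closed neighborhood, either $w=u$ or $\{w,u\}\in E$, so $\dist_G(u,w)\le 1$; likewise $\dist_G(w,v)\le 1$. The triangle inequality for the shortest-path metric then gives $\dist_G(u,v)\le 2$. Since $u\neq v$, \Cref{def: squared graph} makes $\{u,v\}$ an edge of $G^2$, which contradicts the hypothesis. Hence $\Ncl{u}\cap\Ncl{v}=\emptyset$. The equivalence written in the statement, that $\{u,v\}\notin E(G^2)$ iff $\dist_G(u,v)\ge 3$ for distinct $u,v$, is just \Cref{def: squared graph} restated. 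Disconnected pairs have infinite distance and are covered automatically.

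For the consequence, fix a proper colouring $\psi$ of $G^2$ and a colour $c$. Any two distinct vertices $u,v$ with $\psi(u)=\psi(v)=c$ cannot be adjacent in $G^2$, by properness. Applying the first part to each such pair shows that the sets $\Ncl{u}$ and $\Ncl{v}$ are disjoint, so the family $\{\Ncl{v}:\psi(v)=c\}$ is pairwise disjoint.

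No step is a genuine obstacle. The only point needing care is the case analysis $w\in\{u,v\}$ versus $w$ a common neighbor, and the distance bound $\dist_G(u,v)\le 2$ handles all of these cases uniformly.
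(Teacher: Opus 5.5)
Your proof is correct and follows the same route as the paper: a common vertex $w$ of the two closed neighborhoods forces $\dist_G(u,v)\le 2$, hence $\{u,v\}\in E(G^2)$, and the second claim follows because each colour class is independent in $G^2$. The only difference is that you write out the triangle-inequality step, which the paper leaves implicit.
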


\begin{proof}
If $\Ncl{u}\cap \Ncl{v}\neq\emptyset$, then $\dist_G(u,v)\le 2$, hence $\{u,v\}\in E(G^2)$, a contradiction, and the second claim follows since each color class is independent in $G^2$.
\end{proof}

\paragraph*{Algorithm}
Let $H\coloneqq G^2$ and $\psi:V\to[q]$ be any proper coloring of $H$ with color classes $C_c\coloneqq \{v\in V:\psi(v)=c\}$.
\begin{enumerate}
    \item For each vertex $v$, compute $m_v \;\coloneqq\; \max\{f_v(a) \;|\; a:\Ncl{v}\to\lcal\}$ and fix a maximizer $a_v:\Ncl{v}\to\lcal$.
    (This can be done by enumerating all $L^{|\Ncl{v}|}$ assignments to $\Ncl{v}$.)
    \item For each color $c\in[q]$, assemble a labeling $x^{(c)}\in\lcal^V$ by setting, for every $v\in C_c$ and every $u\in\Ncl{v}$, $x^{(c)}_u \coloneqq a_v(u)$.
    Because the closed neighborhoods $\{\Ncl{v}:v\in C_c\}$ are disjoint (\Cref{lem:disjoint-neighborhoods}), no vertex receives conflicting labels.
    \item Assign an arbitrary default label to all remaining vertices, compute $M_c\coloneqq \sum_{v\in C_c} m_v$, and output $\hat x\coloneqq x^{(c^\star)}$ where $c^\star \;\in\; \argmax_{c\in[q]}\; M_c$.
\end{enumerate}

\begin{theorem}[Color-class maximization given a $G^2$-coloring]
\label{thm:color-approx}
Assume $f_v(\cdot)\ge 0$ for all $v\in V$.
Given a proper $q$-coloring $\psi$ of $G^2$, the above algorithm outputs a labeling $\hat x$ such that $F(\hat x)\;\ge\; \frac{1}{q}\cdot \OPT$.
The algorithm performs $\sum_{v\in V} L^{|\Ncl{v}|}$ oracle evaluations to compute $\{m_v\}$ and runs in time $O\left( L^{\Delta(G)+1}\right)\cdot n^{O(1)}$.
In particular, if $L$ is constant and $\Delta(G)$ is bounded, the algorithm runs in polynomial time.
\end{theorem}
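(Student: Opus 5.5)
The proof has three parts: an upper bound on $\OPT$ in terms of the quantities $M_c$, a lower bound on $F(\hat x)$ from nonnegativity, and a running-time count.

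\textbf{Upper bound on $\OPT$.} Let $x^\star$ be an optimal labeling. For every $v$, the restriction $x^\star_{\Ncl{v}}$ is one of the assignments enumerated in step 1, so $f_v(x^\star_{\Ncl{v}})\le m_v$. The color classes $C_1,\dots,C_q$ partition $V$, hence
\[
\OPT=\sum_{c\in[q]}\sum_{v\in C_c} f_v\bigl(x^\star_{\Ncl{v}}\bigr)\le \sum_{c\in[q]} M_c \le q\cdot M_{c^\star},
\]
using $c^\star\in\argmax_c M_c$ in the last step.

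\textbf{Lower bound on $F(\hat x)$.} By \Cref{lem:disjoint-neighborhoods}, the closed neighborhoods $\{\Ncl{v}: v\in C_{c^\star}\}$ are pairwise disjoint. So the labeling $x^{(c^\star)}$ is well defined and agrees with $a_v$ on all of $\Ncl{v}$ for each $v\in C_{c^\star}$. Thus $f_v(\hat x_{\Ncl{v}})=f_v(a_v)=m_v$ for every $v\in C_{c^\star}$. Every other term satisfies $f_u(\hat x_{\Ncl{u}})\ge 0$ by the nonnegativity assumption, no matter what labels the default assignment or overlapping neighborhoods produce. Therefore
\[
F(\hat x)\ge \sum_{v\in C_{c^\star}} m_v = M_{c^\star}\ge \OPT/q .
\]
This step is where nonnegativity is essential: the vertices outside $C_{c^\star}$ receive arbitrary labels, and we must be able to discard their terms without loss.

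\textbf{Running time.} Step 1 enumerates $L^{|\Ncl{v}|}\le L^{\Delta(G)+1}$ assignments for each vertex $v$. That is exactly $\sum_v L^{|\Ncl{v}|}$ oracle calls, each of polynomial cost, for a total of $O(L^{\Delta(G)+1})\cdot n^{O(1)}$. Step 2 assembles $q\le n$ labelings, each in $O(\sum_v |\Ncl{v}|)$ time. Step 3 computes the sums $M_c$ and takes their argmax. Both steps are polynomial. When $L$ is constant and $\Delta(G)$ is bounded, $L^{\Delta(G)+1}=O(1)$, so the whole algorithm runs in polynomial time.
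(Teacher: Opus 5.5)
Your proposal is correct and follows essentially the same argument as the paper. The paper first fixes a color class $c$ with $\sum_{v\in C_c} f_v(x^\star_{\Ncl{v}})\ge \OPT/q$ and then uses $M_{c^\star}\ge M_c$, whereas you sum $f_v(x^\star_{\Ncl{v}})\le m_v$ over all classes at once; this is the same averaging step in a different order. Your use of disjoint neighborhoods plus nonnegativity to obtain $F(\hat x)\ge M_{c^\star}$, and your running-time count, match the paper.
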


\begin{proof}[Proof sketch]
Pick the color class maximizing $M_c$ and combine its disjoint neighborhood maximizers.
Full details (including the running-time bound) appear in \Cref{apd:color-proof}.
\end{proof}

\begin{remark}
    If a coloring is not provided, a greedy coloring of $G^2$ can be computed in polynomial time and uses at most $\Delta(G^2)+1$ colors \cite{west2001introduction}. This immediately yields a $1/(\Delta(G^2)+1)$-approximation. Moreover, since $\Delta(G^2) \leq \Delta(G)^2$, this guarantee is a constant-factor approximation when $\Delta(G)$ is bounded.
\end{remark}

\section{A Baker-type PTAS on planar bounded-degree graphs}
\label{sec:planar-ptas}

The coloring-based approximation of \Cref{sec:color-approx} yields a polynomial time $1/(\Delta(G^2) + 1)$-approximation.
On planar graphs of bounded maximum degree, one can do substantially better: although $G^2$ need not be planar and may have large treewidth, the special \emph{neighborhood} structure of \name (radius-$1$ dependencies in $G$) allows a Baker-style shifting scheme \cite{baker1994approximation}.

At a high level, we remove one BFS layer out of every $k$ layers, and for each offset optimize the objective induced by vertices whose closed neighborhoods avoid the removed layer.
We try all $k$ offsets and return the best resulting labeling.
A vertex can fail to be safe only if its layer or an adjacent layer is removed, so each vertex is unsafe for at most three offsets.
With nonnegative rewards, this implies that some offset retains at least a $(1-3/k)$ fraction of $\OPT$. We formalize this intuition below.

Throughout this section we assume that $f_v(\cdot)\ge 0$ for all $v\in V$, and that the input graph $G$ is planar with maximum degree $\Delta(G)$.

\paragraph*{Layering and shifting.}
Fix a root $r_0\in V$ and perform a breadth-first search (BFS) from $r_0$.
For each $v\in V$, let $\lambda(v)\coloneqq \dist_G(r_0,v)$, and define the BFS layers $B_i\coloneqq \{v\in V:\lambda(v)=i\}$ for $i\ge 0$.
For an integer $k\ge 3$ and an offset $s\in\{0,1,\ldots,k-1\}$, define the removed set
\[
R_s \;\coloneqq\; \{v\in V : \lambda(v)\equiv s \ (\mathrm{mod}\ k)\},
\qquad
G_s \;\coloneqq\; G[V\setminus R_s].
\]
Since $|\lambda(u)-\lambda(v)|\le 1$ for every edge $uv\in E$, removing $R_s$ splits $G_s$ into components contained in at most $k-1$ consecutive layers.
Define the set of \emph{safe} vertices
\[
S_s \;\coloneqq\; \{v\in V\setminus R_s : \Ncl{v}\cap R_s=\emptyset\}.
\]
For $v\in S_s$, we have $\Ncl{v}\subseteq V\setminus R_s$ by definition.
Hence the terms $\{f_v(x_{\Ncl{v}}): v\in S_s\}$ depend only on labels of $V\setminus R_s$.
The remaining vertices $(V\setminus R_s)\setminus S_s$ form a boundary whose local terms involve labels in $R_s$.

For each offset $s$, consider the modified objective
\[
F_s(x)\;\coloneqq\;\sum_{v\in S_s} f_v(x_{\Ncl{v}}),
\qquad x\in \lcal^{V\setminus R_s}.
\]
$F_s$ is obtained from $F$ by dropping exactly the terms that involve removed vertices.
We compute an \emph{optimal} labeling for $F_s$ on $G_s$ by exactly solving \name on each connected component of $G_s$ via CFDP (\Cref{thm:cfdp-correct}) after setting the local rewards of vertices outside $S_s$ to $0$.
We then extend the resulting partial labeling to all vertices in $R_s$ by assigning an arbitrary default label.
Finally, we evaluate the resulting $k$ labelings under $F$ and output the best one.

\begin{theorem}[Baker-type PTAS on planar bounded-degree graphs]
\label{thm:planar-ptas}
Assume $f_v(\cdot)\ge 0$ for all $v\in V$, and that $G$ is planar with maximum degree $\Delta(G)$.
For every $\varepsilon\in(0,1)$, there is an algorithm that outputs a labeling $\hat x\in \lcal^V$ such that $F(\hat x)\;\ge\;(1-\varepsilon)\cdot \OPT$, and the algorithm runs in time $O\!\left(n\cdot k^2 \cdot \Delta^2 \cdot L^{\,O \left( k\Delta(G)^2 \right)}\right)\cdot n^{O(1)}$, where $k\coloneqq \left\lceil \frac{3}{\varepsilon}\right\rceil$.
For constant $L$ and $\Delta(G)$, this is an EPTAS.
\end{theorem}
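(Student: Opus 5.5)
The proof has two parts: an approximation guarantee obtained by averaging over the offsets, and a running-time bound that rests on bounding $\tw(G_s^2)$ for each component of $G_s$.

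\textbf{Approximation.} Fix an optimal labeling $x^\star$ and put $w_v\coloneqq f_v(x^\star_{\Ncl{v}})\ge 0$, so that $\sum_v w_v=\OPT$. A vertex $v$ fails to be safe for offset $s$ only if some vertex of $\Ncl{v}$ lies in $R_s$. Every neighbor of $v$ has BFS layer in $\{\lambda(v)-1,\lambda(v),\lambda(v)+1\}$. Since $k\ge 3$, these three layers occupy three distinct residues modulo $k$, so $v$ is unsafe for at most $3$ of the $k$ offsets. Summing over offsets gives
\[
\sum_{s=0}^{k-1}\sum_{v\in S_s} w_v \;\ge\; (k-3)\,\OPT,
\]
so some offset $s^\ast$ satisfies $F_{s^\ast}(x^\star_{V\setminus R_{s^\ast}})\ge (1-3/k)\OPT\ge(1-\varepsilon)\OPT$. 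Let $\hat x^{(s^\ast)}$ be the labeling computed for this offset. It maximizes $F_{s^\ast}$: the terms of distinct components of $G_{s^\ast}$ involve disjoint variable sets, because for $v\in S_{s^\ast}$ we have $\Ncl{v}$ inside $v$'s own component. Rewards are nonnegative, so the dropped terms only add to $F$. Hence
\[
F(\hat x^{(s^\ast)})\ \ge\ F_{s^\ast}(\hat x^{(s^\ast)})\ \ge\ F_{s^\ast}(x^\star)\ \ge\ (1-\varepsilon)\OPT .
\]
Returning the best of the $k$ labelings under $F$ can only do better.

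\textbf{Running time (main obstacle).} The key step is bounding $\tw(C^2)$ for each component $C$ of $G_s$. Each such $C$ lies within at most $k-1$ consecutive BFS layers. I would first show $\tw(C)=O(k)$ using the standard Baker argument. Contract the layers above $C$, i.e.\ everything between $r_0$ and the top layer of $C$ together with the BFS tree, into a single vertex. This yields a planar graph of radius $O(k)$ containing $C$ as a subgraph, and planar graphs of radius $d$ have treewidth $O(d)$, with a tree decomposition of width at most $3d$ computable in linear time. Contraction and taking subgraphs do not increase treewidth.

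Next I would invoke the bag-expansion bound from \Cref{subsec:twG2}: replace every vertex $u$ of each bag by $\Ncl{u}\cap V(C)$. This gives a tree decomposition of $C^2$ of width
\[
(\tw(C)+1)(\Delta+1)-1 = O(k\Delta),
\]
which is even better than the $O(k\Delta^2)$ the statement allows. It must be checked that this is a valid decomposition. Each edge $uw$ of $C^2$ at distance $2$ goes through a middle vertex $m$ whose bag covers both $u$ and $w$. The subtree for a vertex $u$ is the union of the connected subtrees of the vertices in $\Ncl{u}$, which pairwise intersect or are adjacent because those vertices are adjacent in $C$, so the union is connected.

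\textbf{Assembling the time bound.} Convert this decomposition into a nice one of polynomial size and apply CFDP (\Cref{thm:cfdp-correct}) to each component. This costs $O(|C|\cdot k\Delta^2\cdot L^{O(k\Delta^2)})$ per component, or more simply $n\cdot k\Delta^2\cdot L^{O(k\Delta^2)}$ per offset. Summing over the $k$ offsets and adding polynomial overhead for the BFS, the decompositions and the final evaluations gives the stated bound. Since $k=\lceil 3/\varepsilon\rceil$, constant $L$ and $\Delta$ give $f(\varepsilon)\,n^{O(1)}$, i.e.\ an EPTAS. The delicate point is the planar treewidth bound for a band of layers, i.e.\ justifying the contraction step, and I would cite Baker's $k$-outerplanarity or the radius-treewidth bound for it.
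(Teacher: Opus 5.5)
Your proof is correct and follows the paper's argument. It uses the same averaging over the $k$ offsets (each vertex is unsafe for at most three of them), the same contraction of the layers above a component $C$ into a single vertex $z$, the radius-to-treewidth bound for planar graphs, and CFDP per component. One small point: after contracting, you should keep only $C\cup\{z\}$, as the paper's appendix does, since the whole contracted graph need not have radius $O(k)$.

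The one genuine difference is the bag expansion. You replace each bag vertex $u$ by $\Ncl{u}\cap V(C)$ rather than by its closed neighborhood in $C^2$. Your validity check is sound: distance-$2$ pairs are covered through the middle vertex, and each vertex's subtree stays connected because the subtrees of $u$ and of each neighbor intersect. This gives $\tw(C^2)\le(\tw(C)+1)(\Delta+1)-1=O(k\Delta)$ instead of the paper's $O(k\Delta^2)$. Your argument therefore yields the sharper running time $L^{O(k\Delta)}$, which implies the stated bound.
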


\begin{proof}
Set $k\coloneqq \lceil 3/\varepsilon\rceil$ and consider the $k$ offsets $s\in\{0,1,\ldots,k-1\}$.
We first show that some offset preserves almost all of $\OPT$ on the safe vertices, and then argue that our algorithm finds a labeling achieving at least that value.

Let $x^\star$ be an optimal labeling for the original instance.
For each offset $s$, define $\mathrm{OPT}_s \;\coloneqq\; \sum_{v\in S_s} f_v\!\bigl(x^\star_{\Ncl{v}}\bigr)$.
That is, $\mathrm{OPT}_s$ is the contribution of vertices that are safe under offset $s$ in the optimal labeling $x^\star$.
We claim that each vertex $v\in V$ belongs to $S_s$ for all but at most three offsets $s$.
Indeed, $v\notin S_s$ iff $v\in R_s$ or $v$ has a neighbor in $R_s$.
Since $\lambda(\cdot)$ is a BFS layering, any neighbor $u$ of $v$ satisfies $|\lambda(u)-\lambda(v)|\le 1$.
Thus $v\notin S_s$ can occur only when $s\equiv \lambda(v)$, $s\equiv \lambda(v)-1$, or $s\equiv \lambda(v)+1\pmod{k}$, i.e., for at most three offsets.

Since all local rewards are nonnegative,
\[
\sum_{s=0}^{k-1} \mathrm{OPT}_s
\;\ge\;
\sum_{v\in V} (k-3)\cdot f_v\!\bigl(x^\star_{\Ncl{v}}\bigr)
\;=\;
(k-3)\cdot \OPT,
\]
and hence there exists an offset $s$ such that
\[
\mathrm{OPT}_s \;\ge\; \left(1-\frac{3}{k}\right)\OPT \;\ge\; (1-\varepsilon)\OPT.
\]
Fix such an offset $s$.

By definition of $S_s$, for every $v\in S_s$ we have $\Ncl{v}\subseteq V\setminus R_s$.
Therefore the restriction $x^\star_{V\setminus R_s}$ is feasible for maximizing $F_s(\cdot)$ and achieves value exactly $\mathrm{OPT}_s$.
Consequently, the optimal value of $F_s$ on $G_s$ is at least $\mathrm{OPT}_s$.
Our algorithm computes an optimal labeling $x^{(s)}$ for $F_s$ (by solving each component exactly and combining), hence
\[
F_s\!\left(x^{(s)}\right)\;\ge\;\mathrm{OPT}_s.
\]
Extending $x^{(s)}$ arbitrarily to $R_s$ yields a full labeling $\hat x$.
Since all rewards are nonnegative, $F(\hat x)\ge F_s(\hat x)=F_s(x^{(s)})\ge \mathrm{OPT}_s\ge (1-\varepsilon)\OPT$.

Fix an offset $s$ and a connected component $C$ of $G_s$.
Each such component lies between two removed layers, so it has bounded depth in the BFS layering.
Contracting the BFS layers closer to $r_0$ than $C$ (in particular the removed layer adjacent to $C$) to a single vertex yields a planar graph in which $C$ has radius $O(k)$, and hence $\tw(C)=O(k)$ \cite{eppstein2000diameter}.
Using the bag-expansion bound from \Cref{subsec:twG2} yields $\tw(C^2)=O \left( k\Delta(G)^2 \right)$.
Thus CFDP runs in $O\!\left(|C|\cdot k\Delta(G)^2\cdot L^{O \left( k\Delta(G)^2 \right)}\right)\cdot n^{O(1)}$ time on each component, and summing over components and the $k$ offsets yields the stated bound.
Further details appear in \Cref{apd:ptas-runtime}.
\end{proof}


\section{Conclusion and open problems}
\label{sec:conclusion}
We studied the Neighborhood-Aware Graph Labeling (\name) problem, motivated by optimization oracles in bandits with network interference.
We gave a standard dynamic program over a tree decomposition of $G^2$ and showed that, under SETH, the base $L$ in the exponential dependence on $\tw(G^2)$ is optimal up to polynomial factors.
For approximation with nonnegative rewards, we derived coloring-based algorithms and a Baker-type PTAS on planar graphs of bounded maximum degree (yielding an EPTAS for constant $L$).
Open problems include developing approximation guarantees driven by structure in the reward functions rather than by the graph class.
In particular, it would be natural to study the restricted reward-function families that arise in the network-interference literature (e.g., exposure-mapping or other low-complexity neighborhood summaries) and ask which such families admit polynomial-time approximation algorithms, possibly even on general graphs.

\clearpage
\phantomsection
\label{pg:bib-start}
\bibliography{biblo}

@inproceedings{Bodlaender2005DiscoveringTreewidth,
  author    = {Hans L. Bodlaender},
  title     = {Discovering Treewidth},
  booktitle = {SOFSEM 2005: Theory and Practice of Computer Science ({SOFSEM} 2005)},
  series    = {Lecture Notes in Computer Science},
  volume    = {3381},
  pages     = {1--16},
  publisher = {Springer},
  year      = {2005},
  doi       = {10.1007/978-3-540-30577-4_1},
}

@article{Impagliazzo2001kSAT,
  author  = {Russell Impagliazzo and Ramamohan Paturi},
  title   = {On the Complexity of {k}-{SAT}},
  journal = {Journal of Computer and System Sciences},
  volume  = {62},
  number  = {2},
  pages   = {367--375},
  year    = {2001},
  doi     = {10.1006/jcss.2000.1727},
}

@inproceedings{agarwal2024networkinterference,
  author    = {Abhineet Agarwal and Anish Agarwal and Lorenzo Masoero and Justin Whitehouse},
  title     = {Multi-Armed Bandits with Network Interference},
  booktitle = {Advances in Neural Information Processing Systems 37 ({NeurIPS} 2024)},
  pages     = {36414--36437},
  year      = {2024},
  doi       = {10.52202/079017-1148},
}

@inproceedings{bodlaender1993linear,
  author    = {Hans L. Bodlaender},
  title     = {A linear time algorithm for finding tree-decompositions of small treewidth},
  booktitle = {Proceedings of the Twenty-Fifth Annual {ACM} Symposium on Theory of Computing ({STOC} '93)},
  pages     = {226--234},
  publisher = {ACM},
  year      = {1993},
  doi       = {10.1145/167088.167161},
}

@article{boykov2001graphcuts,
  author  = {Yuri Boykov and Olga Veksler and Ramin Zabih},
  title   = {Fast approximate energy minimization via graph cuts},
  journal = {IEEE Transactions on Pattern Analysis and Machine Intelligence},
  volume  = {23},
  number  = {11},
  pages   = {1222--1239},
  year    = {2001},
  doi     = {10.1109/34.969114},
}

@article{cesabianchi2012combinatorial,
  author  = {Nicol{\`o} Cesa-Bianchi and G{\'a}bor Lugosi},
  title   = {Combinatorial bandits},
  journal = {Journal of Computer and System Sciences},
  volume  = {78},
  number  = {5},
  pages   = {1404--1422},
  year    = {2012},
  doi     = {10.1016/j.jcss.2012.01.001},
}

@inproceedings{chen2013combinatorial,
  author    = {Wei Chen and Yajun Wang and Yang Yuan},
  title     = {Combinatorial multi-armed bandit: General framework and applications},
  booktitle = {Proceedings of the 30th International Conference on Machine Learning ({ICML} 2013)},
  series    = {JMLR Workshop and Conference Proceedings},
  volume    = {28},
  pages     = {151--159},
  publisher = {JMLR.org},
  year      = {2013},
}

@inproceedings{combes2015combinatorial,
  author    = {Richard Combes and Mohammad Sadegh Talebi and Alexandre Prouti{\`{e}}re and Marc Lelarge},
  title     = {Combinatorial bandits revisited},
  booktitle = {Advances in Neural Information Processing Systems 28 ({NeurIPS} 2015)},
  pages     = {2116--2124},
  year      = {2015},
}

@book{cygan2015parameterized,
  author    = {Marek Cygan and Fedor V. Fomin and {\L}ukasz Kowalik and Daniel Lokshtanov and D{\'a}niel Marx and Marcin Pilipczuk and Micha{\l} Pilipczuk and Saket Saurabh},
  title     = {Parameterized Algorithms},
  publisher = {Springer},
  year      = {2015},
  doi       = {10.1007/978-3-319-21275-3},
}

@article{dechter1989treeclustering,
  author  = {Rina Dechter and Judea Pearl},
  title   = {Tree clustering for constraint networks},
  journal = {Artificial Intelligence},
  volume  = {38},
  number  = {3},
  pages   = {353--366},
  year    = {1989},
  doi     = {10.1016/0004-3702(89)90037-4},
}

@inproceedings{freuder1990ktree,
  author    = {Eugene C. Freuder},
  title     = {Complexity of {K}-Tree Structured Constraint Satisfaction Problems},
  booktitle = {Proceedings of the 8th National Conference on Artificial Intelligence ({AAAI}-90)},
  pages     = {4--9},
  publisher = {AAAI Press / The MIT Press},
  year      = {1990},
}

@article{general-factor-arulselvan2018matchings,
  author  = {Ashwin Arulselvan and {\'A}gnes Cseh and Martin Gro{\ss} and David F. Manlove and Jannik Matuschke},
  title   = {Matchings with lower quotas: Algorithms and complexity},
  journal = {Algorithmica},
  volume  = {80},
  number  = {1},
  pages   = {185--208},
  year    = {2018},
  doi     = {10.1007/s00453-016-0252-6},
}

@inproceedings{general-factor2-marx2021degrees,
  author    = {D{\'a}niel Marx and Govind S. Sankar and Philipp Schepper},
  title     = {Degrees and Gaps: Tight Complexity Results of General Factor Problems Parameterized by Treewidth and Cutwidth},
  booktitle = {48th International Colloquium on Automata, Languages, and Programming ({ICALP} 2021)},
  series    = {LIPIcs},
  volume    = {198},
  pages     = {95:1--95:20},
  year      = {2021},
  doi       = {10.4230/LIPIcs.ICALP.2021.95},
}

@inproceedings{interference1-jia2024multi,
  author    = {Su Jia and Peter I. Frazier and Nathan Kallus},
  title     = {Multi-Armed Bandits with Interference: Bridging Causal Inference and Adversarial Bandits},
  booktitle = {Proceedings of the 42nd International Conference on Machine Learning ({ICML} 2025)},
  series    = {Proceedings of Machine Learning Research},
  volume    = {267},
  pages     = {27271--27291},
  publisher = {PMLR},
  year      = {2025},
}

@misc{interference2-faruk2025learning,
      title={Learning Peer Influence Probabilities with Linear Contextual Bandits}, 
      author={Ahmed Sayeed Faruk and Mohammad Shahverdikondori and Elena Zheleva},
      year={2025},
      eprint={2510.19119},
      archivePrefix={arXiv},
      primaryClass={cs.LG},
      url={https://arxiv.org/abs/2510.19119}, 
}

@misc{jamshidi2025graph,
      title={Graph-Dependent Regret Bounds in Multi-Armed Bandits with Interference}, 
      author={Fateme Jamshidi and Mohammad Shahverdikondori and Negar Kiyavash},
      year={2025},
      eprint={2503.07555},
      archivePrefix={arXiv},
      primaryClass={cs.LG},
      url={https://arxiv.org/abs/2503.07555}, 
}

@inproceedings{kempe2003maximizing,
  author    = {David Kempe and Jon M. Kleinberg and {\'E}va Tardos},
  title     = {Maximizing the spread of influence through a social network},
  booktitle = {Proceedings of the Ninth {ACM} {SIGKDD} International Conference on Knowledge Discovery and Data Mining ({KDD} 2003)},
  pages     = {137--146},
  publisher = {ACM},
  year      = {2003},
  doi       = {10.1145/956750.956769},
}

@book{koller2009pgm,
  author    = {Daphne Koller and Nir Friedman},
  title     = {Probabilistic Graphical Models: Principles and Techniques},
  publisher = {MIT Press},
  year      = {2009},
}

@book{west2001introduction,
  author    = {Douglas B. West},
  title     = {Introduction to Graph Theory},
  edition   = {2},
  publisher = {Prentice Hall},
  year      = {2001},
}

@article{kolmogorov2004graphcuts,
  author  = {Vladimir Kolmogorov and Ramin Zabih},
  title   = {What energy functions can be minimized via graph cuts?},
  journal = {IEEE Transactions on Pattern Analysis and Machine Intelligence},
  volume  = {26},
  number  = {2},
  pages   = {147--159},
  year    = {2004},
  doi     = {10.1109/TPAMI.2004.1262177},
}

@incollection{krause2014submodular,
  author    = {Andreas Krause and Daniel Golovin},
  title     = {Submodular function maximization},
  booktitle = {Tractability: Practical Approaches to Hard Problems},
  pages     = {71--104},
  publisher = {Cambridge University Press},
  year      = {2014},
  doi       = {10.1017/CBO9781139177801.004},
}

@article{lauritzen1988local,
  author  = {Steffen L. Lauritzen and David J. Spiegelhalter},
  title   = {Local computations with probabilities on graphical structures and their application to expert systems},
  journal = {Journal of the Royal Statistical Society: Series B (Methodological)},
  volume  = {50},
  number  = {2},
  pages   = {157--194},
  year    = {1988},
  doi     = {10.1111/j.2517-6161.1988.tb01721.x},
}

@inproceedings{lokshtanovMarxSaurabhSODA11,
  author    = {Daniel Lokshtanov and D{\'a}niel Marx and Saket Saurabh},
  title     = {Known Algorithms on Graphs of Bounded Treewidth are Probably Optimal},
  booktitle = {Proceedings of the Twenty-Second Annual {ACM}-{SIAM} Symposium on Discrete Algorithms ({SODA} 2011)},
  pages     = {777--789},
  publisher = {SIAM},
  year      = {2011},
  doi       = {10.1137/1.9781611973082.61},
}

@incollection{lower-treewidth-cygan2015lower,
  author    = {Marek Cygan and Fedor V. Fomin and {\L}ukasz Kowalik and Daniel Lokshtanov and D{\'a}niel Marx and Marcin Pilipczuk and Micha{\l} Pilipczuk and Saket Saurabh},
  title     = {Lower bounds based on the exponential-time hypothesis},
  booktitle = {Parameterized Algorithms},
  pages     = {467--521},
  publisher = {Springer},
  year      = {2015},
  doi       = {10.1007/978-3-319-21275-3_14},
}

@inproceedings{min-stable-cut-lampis2021minimum,
  author    = {Michael Lampis},
  title     = {Minimum Stable Cut and Treewidth},
  booktitle = {48th International Colloquium on Automata, Languages, and Programming ({ICALP} 2021)},
  series    = {LIPIcs},
  volume    = {198},
  pages     = {92:1--92:16},
  year      = {2021},
  doi       = {10.4230/LIPIcs.ICALP.2021.92},
}

@inproceedings{mis-hardness-zuckerman2006linear,
  author    = {David Zuckerman},
  title     = {Linear degree extractors and the inapproximability of max clique and chromatic number},
  booktitle = {Proceedings of the Thirty-Eighth Annual {ACM} Symposium on Theory of Computing ({STOC} '06)},
  pages     = {681--690},
  publisher = {ACM},
  year      = {2006},
  doi       = {10.1145/1132516.1132612},
}

@article{nemhauser_1978,
  author  = {George L. Nemhauser and Laurence A. Wolsey and Marshall L. Fisher},
  title   = {An analysis of approximations for maximizing submodular set functions---I},
  journal = {Mathematical Programming},
  volume  = {14},
  number  = {1},
  pages   = {265--294},
  year    = {1978},
  doi     = {10.1007/BF01588971},
}

@article{power-treewidth-gurski2025behavior,
  author  = {Frank Gurski and Robin Weishaupt},
  title   = {The Behavior of Tree-Width and Path-Width Under Graph Operations and Graph Transformations},
  journal = {Algorithms},
  volume  = {18},
  number  = {7},
  pages   = {386},
  year    = {2025},
  doi     = {10.3390/a18070386},
}

@inproceedings{rossi2015networkrepository,
  author    = {Ryan A. Rossi and Nesreen K. Ahmed},
  title     = {The Network Data Repository with Interactive Graph Analytics and Visualization},
  booktitle = {Proceedings of the Twenty-Ninth {AAAI} Conference on Artificial Intelligence ({AAAI} 2015)},
  pages     = {4292--4293},
  publisher = {AAAI Press},
  year      = {2015},
  doi       = {10.1609/AAAI.V29I1.9277},
}

@inproceedings{tamaki2017positiveinstance,
  author    = {Hisao Tamaki},
  title     = {Positive-Instance Driven Dynamic Programming for Treewidth},
  booktitle = {25th Annual European Symposium on Algorithms ({ESA} 2017)},
  series    = {LIPIcs},
  volume    = {87},
  pages     = {68:1--68:13},
  year      = {2017},
  doi       = {10.4230/LIPIcs.ESA.2017.68},
}

@incollection{treewidth-survey-bienstock1991graph,
  author    = {Daniel Bienstock},
  title     = {Graph searching, path-width, tree-width and related problems (a survey)},
  booktitle = {Reliability of Computer and Communication Networks},
  series    = {DIMACS Series in Discrete Mathematics and Theoretical Computer Science},
  volume    = {5},
  pages     = {33--50},
  publisher = {American Mathematical Society},
  year      = {1991},
  doi       = {10.1090/dimacs/005/02},
}

@article{tw-small-delta-grohe2009tree,
  author  = {Martin Grohe and D{\'a}niel Marx},
  title   = {On tree width, bramble size, and expansion},
  journal = {Journal of Combinatorial Theory, Series B},
  volume  = {99},
  number  = {1},
  pages   = {218--228},
  year    = {2009},
  doi     = {10.1016/j.jctb.2008.06.004},
}

@article{tw-small-delta2-feige2016giant,
  author  = {Uriel Feige and Jonathan Hermon and Daniel Reichman},
  title   = {On giant components and treewidth in the layers model},
  journal = {Random Structures \& Algorithms},
  volume  = {48},
  number  = {3},
  pages   = {524--545},
  year    = {2016},
  doi     = {10.1002/rsa.20597},
}

@article{wainwright2008graphical,
  author  = {Martin J. Wainwright and Michael I. Jordan},
  title   = {Graphical Models, Exponential Families, and Variational Inference},
  journal = {Foundations and Trends in Machine Learning},
  volume  = {1},
  number  = {1--2},
  pages   = {1--305},
  year    = {2008},
  doi     = {10.1561/2200000001},
}

@article{aronowSamii2017interference,
  author  = {Peter M. Aronow and Cyrus Samii},
  title   = {Estimating Average Causal Effects under General Interference, with Application to a Social Network Experiment},
  journal = {The Annals of Applied Statistics},
  volume  = {11},
  number  = {4},
  pages   = {1912--1947},
  year    = {2017},
  doi     = {10.1214/16-AOAS1005},
}

@article{Bodlaender1996LinearTimeTreewidth,
  author  = {Hans L. Bodlaender},
  title   = {A Linear-Time Algorithm for Finding Tree-Decompositions of Small Treewidth},
  journal = {SIAM Journal on Computing},
  volume  = {25},
  number  = {6},
  pages   = {1305--1317},
  year    = {1996},
  doi     = {10.1137/S0097539793251219},
}

@article{baker1994approximation,
  author  = {Brenda S. Baker},
  title   = {Approximation Algorithms for {NP}-Complete Problems on Planar Graphs},
  journal = {Journal of the ACM},
  volume  = {41},
  number  = {1},
  pages   = {153--180},
  year    = {1994},
  doi     = {10.1145/174644.174650},
}

@article{eppstein2000diameter,
  author  = {David Eppstein},
  title   = {Diameter and Treewidth in Minor-Closed Graph Families},
  journal = {Algorithmica},
  volume  = {27},
  number  = {3--4},
  pages   = {275--291},
  year    = {2000},
  doi     = {10.1007/s004530010020},
}

@article{elahi2023contextual,
  author  = {Sepehr Elahi and Baran Atalar and Sevda {\"O}{\u{g}}{\"u}t and Cem Tekin},
  title   = {Contextual Combinatorial Multi-output {GP} Bandits with Group Constraints},
  journal = {Transactions on Machine Learning Research},
  year    = {2023},
}

@inproceedings{comb_bandit2020,
  author    = {Andi Nika and Sepehr Elahi and Cem Tekin},
  title     = {Contextual Combinatorial Volatile Multi-armed Bandit with Adaptive Discretization},
  booktitle = {Proceedings of the Twenty Third International Conference on Artificial Intelligence and Statistics ({AISTATS} 2020)},
  series    = {Proceedings of Machine Learning Research},
  volume    = {108},
  pages     = {1486--1496},
  publisher = {PMLR},
  year      = {2020},
}

@inproceedings{elahi2025vacc,
  author    = {Sepehr Elahi and Paula M{\"u}rmann and Patrick Thiran},
  title     = {Learn to Vaccinate: Combining Structure Learning and Effective Vaccination for Epidemic and Outbreak Control},
  booktitle = {Proceedings of the 42nd International Conference on Machine Learning ({ICML} 2025)},
  series    = {Proceedings of Machine Learning Research},
  volume    = {267},
  pages     = {15182--15214},
  publisher = {PMLR},
  year      = {2025},
}

@article{shahverdikondori2025graph,
  title={Graph Learning is Suboptimal in Causal Bandits},
  author={Shahverdikondori, Mohammad and Etesami, Jalal and Kiyavash, Negar},
  journal={arXiv preprint arXiv:2510.16811},
  year={2025}
}

\clearpage
\phantomsection
\label{pg:appendix-start}

\crefname{appendix}{Appendix}{Appendices}
\Crefname{appendix}{Appendix}{Appendices}

\appendix
\crefalias{section}{appendix}
\crefalias{subsection}{appendix}

\begin{center} \label{sec: apd}
    {\Large \textbf{Appendices}}
\end{center}

\section{Further discussion on related work} \label{app:related_work}

\paragraph*{Treewidth-based exact algorithms and dynamic programming}
Dynamic programming on tree decompositions is a central technique for designing exact algorithms for NP-hard graph problems, with running times typically exponential in the treewidth of the input graph.
Classic examples include \textsc{Maximum Independent Set}, \textsc{Vertex Cover}, \textsc{General Factor}, and various connectivity problems (e.g., Hamiltonicity variants), all of which admit treewidth-parameterized algorithms via bag-based DP (see, e.g., \cite{cygan2015parameterized,bodlaender1993linear,treewidth-survey-bienstock1991graph,Bodlaender2005DiscoveringTreewidth,general-factor-arulselvan2018matchings,general-factor2-marx2021degrees}).
ETH- and SETH-based lower bounds further suggest that, for many problems, this exponential dependence on treewidth is essentially unavoidable \cite{lower-treewidth-cygan2015lower,cygan2015parameterized,lokshtanovMarxSaurabhSODA11}.
In contrast, the local-interaction structure of \name makes the squared graph $G^2$ the relevant object for complexity analysis.
As discussed in \Cref{subsec:twG2}, the problem cannot be made tractable by assuming that either $\tw(G)$ or $\Delta(G)$ is constant.
Instead, tractability requires both quantities to be small.
Similar phenomena have been observed in prior work.
One example is the \textsc{Minimum Stable Cut} problem, which is shown to admit algorithms with running time exponential in $\tw(G)\cdot\Delta(G)$ \cite{min-stable-cut-lampis2021minimum}.
Although that problem is also a combinatorial optimization problem with neighborhood-based objective functions, its underlying source of hardness differs from that of \name, since in our setting $\tw(G^2)$ can be significantly smaller than $\tw(G)\cdot\Delta(G)$.
A recent epidemic-control formulation with local interactions also exploits treewidth-parameterized optimization methods; see \cite{elahi2025vacc}.

\paragraph*{Graph labeling and local interaction models}
A large body of work studies graph labeling objectives in which the global score decomposes into unary and pairwise terms.
\begin{equation*}
\sum_{v\in V} g_v(x_v) \;+\; \sum_{(u,v)\in E}g_{uv}(x_u,x_v).
\end{equation*}
This covers standard formulations in networked resource allocation and MAP inference in pairwise Markov random fields (MRFs), and is a dominant modeling choice in computer vision tasks such as image segmentation and stereo matching \cite{boykov2001graphcuts,kolmogorov2004graphcuts,wainwright2008graphical,koller2009pgm}.
For these pairwise models, exact inference and MAP can be performed by variable elimination and junction-tree style algorithms whose time and space complexity are exponential in the treewidth of the \emph{original} graph $G$, with typical bounds of the form $O\!\left(n\cdot L^{\tw(G)+1}\right)$ for label alphabet size $L$ \cite{lauritzen1988local,dechter1989treeclustering}.
Equivalently, \name can be viewed as a weighted constraint satisfaction problem (or factor graph) with one factor whose scope is $\Ncl{v}$ for each $v\in V$.
The corresponding primal graph is exactly $G^2$ \cite{dechter1989treeclustering,freuder1990ktree}.
Our work can be seen as a strict generalization of the unary/pairwise setting.
In particular, \name allows \emph{arbitrary} neighborhood reward functions $f_v(x_{\Ncl{v}})$, which in general cannot be represented as sums over edges.
This leads to a fundamentally different structural parameter, namely $\tw(G^2)$.

\paragraph*{Combinatorial bandits and bandits with interference (oracle motivation)}
A recurring theme in multi-armed bandit models in sequential learning with large or structured action spaces is the use of an \emph{offline optimization oracle}.
Given estimated rewards (or upper confidence bounds), the learner repeatedly selects an action maximizing a global objective over an exponentially large decision set.
This paradigm is standard in combinatorial bandits \cite{cesabianchi2012combinatorial,chen2013combinatorial,combes2015combinatorial, comb_bandit2020, elahi2023contextual, shahverdikondori2025graph}.
It becomes particularly salient in multi-armed bandits with \emph{network interference} \cite{interference1-jia2024multi,agarwal2024networkinterference,jamshidi2025graph}, where each round assigns an arm (label) to every unit (node) and each unit's reward depends on the joint assignment within its closed neighborhood.
Closely related but distinct, \cite{interference2-faruk2025learning} studies learning peer influence probabilities under a contextual bandit model with edge-level interventions.
Interference-aware algorithms typically maintain local models for each node as a function of its neighborhood assignment.
They then require maximizing the sum of these local objectives to choose a globally consistent joint assignment.
This offline step coincides exactly with \name (with labels corresponding to arms and node rewards given by the current local estimates).
This oracle is made explicit, for example, in the PUCB-I algorithm of \cite{jamshidi2025graph} and in the elimination-based approach of \cite{agarwal2024networkinterference}, where repeated global maximization is used as a subroutine across epochs.
Our results complement this line of work by characterizing the computational and approximability limits of this oracle and by providing an optimal exact algorithm under bounded $\tw(G^2)$.

\section{Omitted proofs} \label{apd:proofs}

This section presents the omitted proofs of the main text.

\subsection{Proof of Theorem~\ref{thm:cfdp-correct}}
\label{apd:proof-cfdp}

\begin{proof}
We prove the correctness of CFDP.
The running time and space bounds are analyzed in \Cref{subsec:cfdp-complexity}.

Let $H\coloneqq G^2$.
Let $(T,\{X_i\}_{i\in I})$ be the rooted nice tree decomposition of $H$ given to the algorithm, with root $r$.
For every $v\in V$, recall that $\Ncl{v}$ denotes the closed neighborhood of $v$ in $G$.

By \Cref{lem:clique-in-bag}, for every $v\in V$ there exists a bag containing $\Ncl{v}$.
Fix any mapping $\beta:V\to I$ such that $\Ncl{v}\subseteq X_{\beta(v)}$ for all $v\in V$.
For each $i\in I$, define $V_i\coloneqq \{v\in V : \beta(v)=i\}$.
For a labeling $\ell:X_i\to\lcal$, recall $\Phi_i(\ell)\coloneqq \sum_{v\in V_i} f_v\!\bigl(\ell_{\Ncl{v}}\bigr)$.
By \eqref{eq:cfdp-decomposition}, for every labeling $x\in\lcal^V$ we have $F(x)=\sum_{i\in I} \Phi_i\bigl(x_{X_i}\bigr)$.

For a node $i\in I$, recall that $T_i$ denotes the set of nodes in the subtree of $T$ rooted at $i$.
Define $U_i\coloneqq \bigcup_{j\in T_i} X_j$.
For fixed $i$ and $\ell:X_i\to\lcal$, the quantity $\sum_{j\in T_i}\Phi_j\bigl(x_{X_j}\bigr)$ depends only on the restriction $x_{U_i}$.
Moreover, any labeling of $U_i$ consistent with $\ell$ can be extended arbitrarily to a labeling of $V$ without affecting $\sum_{j\in T_i}\Phi_j(x_{X_j})$.
Therefore,
\begin{equation}
\label{eq:opt-local-domain}
\max\left\{\sum_{j\in T_i}\Phi_j\bigl(x_{X_j}\bigr) : x\in \lcal^V,\; x_{X_i}=\ell\right\}
=
\max\left\{\sum_{j\in T_i}\Phi_j\bigl(x_{X_j}\bigr) : x\in \lcal^{U_i},\; x_{X_i}=\ell\right\}.
\end{equation}
Let $\mathrm{OPT}_i(\ell)$ denote this common value.

We prove by induction on the decomposition that for every $i\in I$ and every $\ell:X_i\to\lcal$, the value computed by CFDP satisfies $\mathrm{DP}_i(\ell)=\mathrm{OPT}_i(\ell)$.

\paragraph*{Leaf.}
Let $i$ be a leaf.
Then $X_i=\emptyset$.
Since $\Ncl{v}\neq\emptyset$ for every $v\in V$, we have $V_i=\emptyset$ and thus $\Phi_i(\emptyset)=0$.
Hence $\mathrm{OPT}_i(\emptyset)=0=\mathrm{DP}_i(\emptyset)$.

\paragraph*{Introduce.}
Let $i$ be an introduce node with child $j$ and $X_i=X_j\cup\{a\}$.
We claim that $a\notin U_j$.
Indeed, if $a\in X_k$ for some $k\in T_j$, then by the connectedness condition of tree decompositions the vertex $a$ must also belong to $X_j$, contradicting $a\notin X_j$.
Consequently, in the maximization defining $\mathrm{OPT}_i(\ell)$, the contribution of the subtree $T_j$ depends on $\ell$ only through its restriction $\ell_{X_j}$, and its optimal value is $\mathrm{OPT}_j(\ell_{X_j})$.
Using $T_i=T_j\cup\{i\}$, we obtain
\begin{equation*}
\mathrm{OPT}_i(\ell)=\Phi_i(\ell)+\mathrm{OPT}_j\bigl(\ell_{X_j}\bigr),
\end{equation*}
which matches the introduce recurrence.
By the induction hypothesis, $\mathrm{DP}_j\bigl(\ell_{X_j}\bigr)=\mathrm{OPT}_j\bigl(\ell_{X_j}\bigr)$.
Therefore $\mathrm{DP}_i(\ell)=\mathrm{OPT}_i(\ell)$.

\paragraph*{Forget.}
Let $i$ be a forget node with child $j$ and $X_i=X_j\setminus\{a\}$.
Fix $\ell:X_i\to\lcal$.
For any $x\in\lcal^{U_i}$ with $x_{X_i}=\ell$, let $c\coloneqq x_a$ and define $\ell'\coloneqq \ell\cup\{a\mapsto c\}$.
Then $\ell':X_j\to\lcal$ and $x_{X_j}=\ell'$.
Since $T_i=T_j\cup\{i\}$ and $\Phi_i$ depends only on labels on $X_i$, we obtain
\begin{equation*}
\mathrm{OPT}_i(\ell)=\Phi_i(\ell)+\max_{c\in\lcal}\;\mathrm{OPT}_j\bigl(\ell\cup\{a\mapsto c\}\bigr),
\end{equation*}
which matches the forget recurrence.
By the induction hypothesis, this implies $\mathrm{DP}_i(\ell)=\mathrm{OPT}_i(\ell)$.

\paragraph*{Join.}
Let $i$ be a join node with children $j_1,j_2$ and $X_i=X_{j_1}=X_{j_2}$.
We claim that $U_{j_1}\cap U_{j_2}=X_i$.
If a vertex $v$ belongs to both $U_{j_1}$ and $U_{j_2}$, then it appears in some bag in each of the two subtrees.
By the connectedness condition, every bag on the path between these occurrences contains $v$, in particular $X_i$ contains $v$.
Thus $U_{j_1}\cap U_{j_2}\subseteq X_i$.
The reverse inclusion is immediate since $X_i\subseteq U_{j_1}\cap U_{j_2}$.
Moreover, $U_i=U_{j_1}\cup U_{j_2}$.

Fix a labeling $\ell:X_i\to\lcal$.
Since $U_{j_1}\cap U_{j_2}=X_i$, any pair of labelings $x_1\in\lcal^{U_{j_1}}$ and $x_2\in\lcal^{U_{j_2}}$ with ${x_1}_{X_i}={x_2}_{X_i}=\ell$ can be combined into a labeling $x\in\lcal^{U_i}$.
Moreover, for such an $x$ we have
$\sum_{k\in T_{j_1}}\Phi_k\bigl(x_{X_k}\bigr)=\sum_{k\in T_{j_1}}\Phi_k\bigl((x_1)_{X_k}\bigr)$
and
$\sum_{k\in T_{j_2}}\Phi_k\bigl(x_{X_k}\bigr)=\sum_{k\in T_{j_2}}\Phi_k\bigl((x_2)_{X_k}\bigr)$.
It follows that
\begin{equation*}
\max_{x\in\lcal^{U_i},\,x_{X_i}=\ell}
\left(\sum_{k\in T_{j_1}}\Phi_k\bigl(x_{X_k}\bigr)+\sum_{k\in T_{j_2}}\Phi_k\bigl(x_{X_k}\bigr)\right)
=\mathrm{OPT}_{j_1}(\ell)+\mathrm{OPT}_{j_2}(\ell).
\end{equation*}
Using $T_i=T_{j_1}\cup T_{j_2}\cup\{i\}$, we obtain
\begin{equation*}
\mathrm{OPT}_i(\ell)=\Phi_i(\ell)+\mathrm{OPT}_{j_1}(\ell)+\mathrm{OPT}_{j_2}(\ell),
\end{equation*}
which matches the join recurrence.
By the induction hypothesis, this implies $\mathrm{DP}_i(\ell)=\mathrm{OPT}_i(\ell)$.

This completes the induction. Finally, since $X_r=\emptyset$ and $U_r=V$, we have $\mathrm{DP}_r(\emptyset)=\mathrm{OPT}_r(\emptyset)=\max_{x\in\lcal^V}\;\sum_{i\in I}\Phi_i\bigl(x_{X_i}\bigr)$.
By \eqref{eq:cfdp-decomposition}, this equals $\max_{x\in\lcal^V} F(x)$.
CFDP therefore computes the optimal value.
Moreover, storing the maximizing choice at each forget step yields a consistent set of backpointers. At join nodes we recurse into both children with the same bag labeling $\ell$, and consistency is ensured by the shared interface $X_i$.
Following these backpointers reconstructs a labeling $x^\star\in\lcal^V$ achieving $\mathrm{DP}_r(\emptyset)$.
Thus CFDP outputs an optimal labeling for \name.
\end{proof}

\subsection{Running time and space of CFDP}
\label{subsec:cfdp-complexity}

We justify the bounds stated in \Cref{thm:cfdp-correct}.
Let $t$ be the width of the decomposition of $G^2$.
Each bag has size at most $t+1$, hence each DP table has at most $L^{t+1}$ entries.
Since $|I|=O(n\cdot t)$, it suffices to bound the work per node.

For introduce and join nodes, each table entry is computed from a constant number of child entries, so the transition can be implemented in $O(1)$ time per state.
For forget nodes, although the recurrence takes a maximum over $L$ labels, the table can be computed in $O(L^{t+1})$ time by scanning all child states $\ell':X_j\to\lcal$ once, updating the parent key $\ell=\ell'_{X_i}$ via $\mathrm{DP}'_i(\ell)\gets \max\{\mathrm{DP}'_i(\ell),\mathrm{DP}_j(\ell')\}$, and finally setting $\mathrm{DP}_i(\ell)=\mathrm{DP}'_i(\ell)+\Phi_i(\ell)$.
Thus, every node can be processed in $O(L^{t+1})$ time, yielding a total DP running time of $O(|I|\cdot L^{t+1})=O(n\cdot t\cdot L^{t+1})$.

Computing all values $\Phi_i(\ell)$ over all nodes $i\in I$ and states $\ell:X_i\to\lcal$ requires $O(n\cdot L^{t+1})$ oracle evaluations, since the sets $V_i$ form a partition of $V$.
Under the representation assumptions of \Cref{sec: setup}, each query $f_v(\cdot)$ can be carried out in time $\mathrm{poly}(|\text{input}|)$.
Thus, the overall running time can be written as $O\bigl(n\cdot t\cdot L^{t+1}\bigr)\cdot \mathrm{poly}(|\text{input}|)$, and the preprocessing does not change the exponential dependence on $t$ and $L$.

Storing all tables uses $O(|I|\cdot L^{t+1})$ space.
If only the optimal value is required, the tables can be computed in a postorder traversal while discarding children after use, using $O(h\cdot L^{t+1})$ working space, where $h$ is the height of $T$.
Recovering an optimal labeling via backpointers requires $O(|I|\cdot L^{t+1})$ space in the worst case.

\subsection{Omitted proofs for inapproximability} \label{apd:inapprox-proofs}

\begin{proof}[Proof of Lemma~\ref{lem:gap-zero}]
Consider an instance produced by \Cref{lem:sat-star}, for which $F(x)\in\{0,1\}$ for every labeling $x$ and $\OPT\in\{0,1\}$.
Suppose $\mathcal{A}$ is a polynomial-time $\rho(n)$-approximation algorithm and let $\hat x$ be its output.
If the underlying formula is satisfiable, then $\OPT=1$ and thus $F(\hat x)\ge 1/\rho(n)>0$, which forces $F(\hat x)=1$.
If the formula is unsatisfiable, then $\OPT=0$ and hence $F(x)=0$ for all labelings $x$, implying $F(\hat x)=0$.

Therefore, given $\varphi$, we can construct the corresponding \name instance from \Cref{lem:sat-star}, run $\mathcal{A}$, and output \textsc{sat} iff $F(\hat x)=1$.
Since $F(\hat x)$ can be evaluated in polynomial time in the oracle model, this decides satisfiability in polynomial time, implying $\mathsf{P}=\mathsf{NP}$.
\end{proof}

\begin{proof}[Proof of Theorem~\ref{thm:inapprox}]
Suppose, for contradiction, that there exists a polynomial-time $\rho(n)$-approximation algorithm $\mathcal{A}$ for \name,
where $\rho(n)=n^{1-\varepsilon}$ for some $\varepsilon>0$ and $n\coloneqq |V|$.

Let $G=(V,E)$ be an arbitrary graph.
We construct an instance of \name on the same graph $G$ with binary label set $\lcal=\{1,2\}$.
For each $v\in V$, define
\[
f_v(x_{\Ncl{v}}) \coloneqq
\begin{cases}
1 & \text{if } x_v=2 \text{ and } x_u=1 \text{ for all } u\in \Ncl{v}\setminus\{v\},\\
0 & \text{otherwise.}
\end{cases}
\]
For a labeling $x\in\{1,2\}^V$, let $I(x) \coloneqq \{v\in V : x_v=2 \text{ and } x_u=1 \text{ for all } u\in \Ncl{v}\setminus\{v\}\}$.
Then $I(x)$ is an independent set in $G$ (since adjacent vertices cannot both satisfy the defining condition), and moreover
$F(x)=\sum_{v\in V} f_v(x_{\Ncl{v}})=|I(x)|$.
Hence $F(x)\le \alpha(G)$ for every labeling $x$.
Conversely, for any independent set $I\subseteq V$, the labeling $x=\mathbf{1}_I$ (setting $x_v=2$ iff $v\in I$) satisfies $F(x)=|I|$.
Therefore $\OPT=\alpha(G)$.

Apply $\mathcal{A}$ to this \name instance to obtain $\hat x$ with
$F(\hat x)\ge \OPT/\rho(n)=\alpha(G)/\rho(n)$.
Since $|I(\hat x)|=F(\hat x)$, the set $I(\hat x)$ is a $\rho(n)$-approximation for MIS on $G$.
This contradicts the $n^{1-\varepsilon}$-inapproximability of MIS (equivalently, \textsc{Max Clique} on the complement graph) unless $\mathsf{P}=\mathsf{NP}$ \cite{mis-hardness-zuckerman2006linear}.
\end{proof}

\subsection{Proof of Theorem~\ref{thm:color-approx}} \label{apd:color-proof}

\begin{proof}[Proof of Theorem~\ref{thm:color-approx}]
Let $x^\star$ be an optimal labeling.
For each color $c\in[q]$, define $\mathrm{OPT}_c \coloneqq \sum_{v\in C_c} f_v\bigl(x^\star_{\Ncl{v}}\bigr)$.
Since $\{C_c\}_{c\in[q]}$ partitions $V$, we have $\sum_{c=1}^q \mathrm{OPT}_c = \OPT$.
Hence there exists $c\in[q]$ with $\mathrm{OPT}_c\ge \OPT/q$.
Fix such a color $c$.
For every $v\in C_c$, we have $m_v \ge f_v\bigl(x^\star_{\Ncl{v}}\bigr)$ by definition of $m_v$ as a maximum over all assignments to $\Ncl{v}$.
Therefore,
\[
M_c \;=\; \sum_{v\in C_c} m_v \;\ge\; \sum_{v\in C_c} f_v\bigl(x^\star_{\Ncl{v}}\bigr) \;=\; \mathrm{OPT}_c \;\ge\; \OPT/q.
\]
By choice of $c^\star$, we obtain $M_{c^\star}\ge M_c\ge \OPT/q$.
By construction of $\hat x$ and \Cref{lem:disjoint-neighborhoods}, for every $v\in C_{c^\star}$ we have $\hat x_{\Ncl{v}}=a_v$ and thus $f_v\bigl(\hat x_{\Ncl{v}}\bigr)=f_v(a_v)=m_v$.
Since all local rewards are nonnegative, we conclude
\[
F(\hat x)\;=\;\sum_{v\in V} f_v\bigl(\hat x_{\Ncl{v}}\bigr)\;\ge\;\sum_{v\in C_{c^\star}} f_v\bigl(\hat x_{\Ncl{v}}\bigr)\;=\;\sum_{v\in C_{c^\star}} m_v\;=\;M_{c^\star}\;\ge\;\OPT/q.
\]
The running time follows by enumerating all $L^{|\Ncl{v}|}$ assignments to each $\Ncl{v}$ to compute $m_v$ and a maximizer $a_v$, plus $O(n)$ overhead to compute the scores $M_c$ and assemble $\hat x$.
\end{proof}

\begin{corollary}[Greedy coloring instantiation and bounded-degree regime]
\label{cor:color-approx-greedy}
Without a coloring as input, one can compute a greedy coloring of $G^2$ using at most $\Delta(G^2)+1$ colors \cite{west2001introduction}.
Since every vertex has at most $\Delta(G)^2$ vertices within distance at most $2$ in $G$, we have $\Delta(G^2)\le \Delta(G)^2$ and obtain a $(\Delta(G)^2+1)$-approximation.
The coloring overhead is polynomial in $n$.
\end{corollary}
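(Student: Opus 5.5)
The statement to prove is \Cref{cor:color-approx-greedy}. It has three parts: greedy coloring of $G^2$ uses at most $\Delta(G^2)+1$ colors, $\Delta(G^2)\le\Delta(G)^2$, and feeding this coloring into \Cref{thm:color-approx} gives the approximation guarantee.

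\textbf{Step 1 (computing $G^2$ and coloring it).} First construct $H=G^2$ explicitly in polynomial time. For each $v$, collect all vertices reachable by paths of length one or two; this takes $O(n\cdot\Delta(G)^2)$ time, and in any case $O(n^3)$. Then run the standard greedy coloring on $H$ in an arbitrary vertex order. When a vertex $v$ is processed, at most $\deg_H(v)\le\Delta(H)$ of its neighbors are already colored. So some color in $\{1,\dots,\Delta(H)+1\}$ is free, and the resulting coloring is proper with $q\le\Delta(G^2)+1$ colors, as in \cite{west2001introduction}. This overhead is polynomial in $n$.

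\textbf{Step 2 (degree bound).} Fix $v$ and count its neighbors in $H$, i.e., the vertices $u\neq v$ with $\dist_G(u,v)\le 2$. The vertices at distance exactly $1$ number at most $\deg_G(v)\le\Delta$. Every vertex at distance exactly $2$ is reached through some neighbor $w$ of $v$, and each such $w$ has at most $\Delta-1$ neighbors other than $v$. Hence
\[
\deg_H(v)\le\Delta+\Delta(\Delta-1)=\Delta^2,
\]
and therefore $\Delta(G^2)\le\Delta(G)^2$ and $q\le\Delta(G)^2+1$.

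\textbf{Step 3 (apply the theorem).} Invoke \Cref{thm:color-approx} with this proper $q$-coloring, under its nonnegativity assumption $f_v\ge0$. It gives
\[
F(\hat x)\ge \OPT/q\ge \OPT/(\Delta(G)^2+1),
\]
which is a $1/(\Delta(G)^2+1)$-approximation in the paper's multiplicative convention; the corollary's phrase ``$(\Delta(G)^2+1)$-approximation'' states the same guarantee as a ratio. The total running time is the theorem's bound plus the polynomial coloring overhead from Step 1.

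The only step with any real content is the counting in Step 2. The main point of care there is not double-counting: the bound $\Delta+\Delta(\Delta-1)$ is an upper bound even when the distance-one and distance-two sets overlap, so overlaps only help.
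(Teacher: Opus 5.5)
Your proposal is correct and follows the same route as the paper. You greedy-color $G^2$ with at most $\Delta(G^2)+1$ colors, bound $\Delta(G^2)\le \Delta+\Delta(\Delta-1)=\Delta(G)^2$ by counting vertices at distance one and two, and feed the coloring into the color-class theorem to obtain $F(\hat x)\ge \OPT/(\Delta(G)^2+1)$ under nonnegative rewards. Your explicit count in Step 2, and your reading of ``$(\Delta(G)^2+1)$-approximation'' through the paper's convention $F(x)\ge \OPT/\rho$, simply make precise what the paper asserts in one line.
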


\subsection{Running time details for Theorem~\ref{thm:planar-ptas}} \label{apd:ptas-runtime}

We justify the running-time bound in the proof of \Cref{thm:planar-ptas}.

Fix an offset $s$ and a connected component $C$ of $G_s$.
Contract all vertices in BFS layers closer to $r_0$ than $C$ into a single vertex $z$ (and delete all other components) to obtain a planar graph $\widetilde{G}$ containing $C\cup\{z\}$.
For every $v\in C$, following the BFS tree from $v$ toward $r_0$ reaches a vertex in $R_s$ after at most $k-1$ steps, hence $\dist_{\widetilde{G}}(v,z)\le k-1$.
Thus the (graph-theoretic) radius of $\widetilde{G}[C\cup\{z\}]$ is at most $k-1$.
Planar graphs have linear local treewidth, and in particular treewidth bounded by a function of the radius \cite{eppstein2000diameter}, hence $\tw(C)=O(k)$.

By the bag-expansion bound discussed in \Cref{subsec:twG2}, we have
\[
\tw(C^2)\;\le\;(\tw(C)+1)\bigl(\Delta(C^2)+1\bigr)-1
\;\le\;(\tw(C)+1)\bigl(\Delta^2+1\bigr)-1
\;=\;O(k\Delta^2),
\]
We use $\Delta(C^2)\le \Delta(C)^2\le \Delta^2$.
Applying CFDP (\Cref{thm:cfdp-correct}) to each component costs $O\!\left(|C|\cdot k\Delta^2\cdot L^{O(k\Delta^2)}\right)\cdot n^{O(1)}$ time.
Summing over components gives $O\!\left(n\cdot k\Delta^2\cdot L^{O(k\Delta^2)}\right)\cdot n^{O(1)}$ per offset.
Multiplying by the $k$ offsets yields the stated running time.

\begin{remark}[Beyond radius-$1$ neighborhoods]
The constant $3$ in the shifting loss arises from radius-$1$ dependence: a term $f_v$ is affected only if $v$ lies within graph distance $1$ of the removed layer.
More generally, if each $f_v$ depends on all vertices within distance at most $p$ of $v$ in $G$ (equivalently, on $G^p$ as in \Cref{sec: setup}), the same argument yields a $(1-(2p+1)/k)$-approximation by discarding the $(2p+1)$ layers within distance $p$ of the removed class.
\end{remark}

\section{Budgeted monotone submodular variant} \label{apd:submod}

Many interference and activation/coverage models use binary decisions (treat/activate) subject to a cardinality budget, where each node's utility depends on its own status and on an \emph{exposure} induced by treated neighbors (e.g., count- or threshold-based exposure mappings) \cite{aronowSamii2017interference,kempe2003maximizing,krause2014submodular}.
In such regimes, diminishing-returns structure is often natural; below we give a \emph{local} sufficient condition that implies global submodularity of the NAGL objective.

We restrict to binary labels $\lcal=\{0,1\}$ and identify a labeling $x\in\{0,1\}^V$ with the active set $S\coloneqq\{v\in V:x_v=1\}$.
For each $v\in V$, define the induced local set function $g_v:2^{\Ncl{v}}\to\mathbb{R}_{\ge 0}$ (where $\Ncl{v}$ is the closed neighborhood in $G$) by
$g_v(T)\coloneqq f_v(\mathbf{1}_T)$, where $\mathbf{1}_T$ is the $\{0,1\}$-assignment on $\Ncl{v}$ that is $1$ on $T$ and $0$ otherwise.
Then the NAGL objective can be written as $F(S)\;=\;\sum_{v\in V} g_v(S\cap \Ncl{v})$.
Given a budget $K$, our goal is to solve $\max\{F(S): S\subseteq V,\ |S|\le K\}$.

We assume that each $g_v$ is monotone and submodular on ground set $\Ncl{v}$.

The next lemma lifts local diminishing returns on each $\Ncl{v}$ to global submodularity of $F$.

\begin{lemma}[Local-to-global submodularity]
\label{lem:local-to-global-submod}
If every $g_v$ is monotone and submodular, then $F:2^V\to\mathbb{R}_{\ge 0}$ is monotone and submodular.
\end{lemma}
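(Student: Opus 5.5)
The plan is to write $F$ as a sum of compositions $F(S)=\sum_{v\in V} h_v(S)$, where $h_v(S)\coloneqq g_v(S\cap \Ncl{v})$. We then show that each $h_v:2^V\to\mathbb{R}_{\ge 0}$ is monotone and submodular on the full ground set $V$. Since nonnegative sums of monotone submodular functions are monotone submodular, this will finish the proof, because both properties are preserved under addition. Nonnegativity of $F$ is immediate from $g_v\ge 0$.

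For monotonicity, take $A\subseteq B\subseteq V$. Then $A\cap\Ncl{v}\subseteq B\cap\Ncl{v}$, so monotonicity of $g_v$ gives $h_v(A)\le h_v(B)$.

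For submodularity, we use the diminishing-returns form. Fix $A\subseteq B\subseteq V$ and $e\in V\setminus B$, and split into two cases.
\begin{itemize}
\item If $e\notin \Ncl{v}$, then $(A\cup\{e\})\cap\Ncl{v}=A\cap\Ncl{v}$, and similarly for $B$. Both marginals are therefore $0$, and the inequality holds with equality.
\item If $e\in\Ncl{v}$, set $A'\coloneqq A\cap\Ncl{v}\subseteq B'\coloneqq B\cap\Ncl{v}$. Then $e\in\Ncl{v}\setminus B'$, and
\[
h_v(A\cup\{e\})-h_v(A)=g_v(A'\cup\{e\})-g_v(A')\ge g_v(B'\cup\{e\})-g_v(B')=h_v(B\cup\{e\})-h_v(B),
\]
where the inequality is submodularity of $g_v$ on ground set $\Ncl{v}$.
\end{itemize}
This is the standard fact that restricting a submodular function to $S\cap W$ preserves submodularity.

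Summing these inequalities over $v$ yields $F(A\cup\{e\})-F(A)\ge F(B\cup\{e\})-F(B)$ and $F(A)\le F(B)$. There is no real obstacle. The only point requiring care is the bookkeeping in the intersection step: each $g_v$ is defined only on subsets of $\Ncl{v}$, so the lift via $S\mapsto S\cap\Ncl{v}$ must be made explicit, and we must check that $e\notin B$ implies $e\notin B'$ before applying the local diminishing-returns inequality.
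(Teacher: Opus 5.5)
Your proof is correct and follows essentially the same route as the paper: restrict to $A\cap N[v]\subseteq B\cap N[v]$, note that the $v$-th marginal vanishes when the added element lies outside $N[v]$, apply submodularity of $g_v$ otherwise, and sum over $v$. Your version is somewhat more explicit than the paper's, for instance by spelling out the monotonicity step that the paper only says ``follows similarly,'' but the argument is the same.
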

\begin{proof}
Fix $A\subseteq B\subseteq V$ and $s\in V\setminus B$.
For each $v$, let $A_v\coloneqq A\cap \Ncl{v}$ and $B_v\coloneqq B\cap \Ncl{v}$, so $A_v\subseteq B_v$.
If $s\notin \Ncl{v}$ then the marginal contribution of $g_v$ is $0$ under both $A$ and $B$; otherwise, submodularity of $g_v$ gives
$g_v(A_v\cup\{s\})-g_v(A_v)\ge g_v(B_v\cup\{s\})-g_v(B_v)$.
Summing over $v$ yields diminishing returns for $F$; monotonicity follows similarly.
\end{proof}

We maximize $F(S)$ subject to $|S|\le K$ via Standard Greedy, which iteratively adds a vertex with maximum marginal gain $\mathrm{marg}(u\mid S)\coloneqq F(S\cup\{u\})-F(S)$.
Only local terms for vertices $w$ with $u\in \Ncl{w}$ can change when adding $u$, and these vertices are exactly $w\in\Ncl{u}$.
Thus $\mathrm{marg}(u\mid S)$ can be computed using only oracle evaluations on the neighborhoods $\{\Ncl{w}: w\in\Ncl{u}\}$.

\begin{theorem}[Nemhauser--Wolsey--Fisher \cite{nemhauser_1978}]
Let $S_{\mathrm{gr}}$ be the set returned by Standard Greedy.
Under the assumptions of Lemma~\ref{lem:local-to-global-submod}, $F(S_{\mathrm{gr}})\;\ge\;\left(1-\frac{1}{e}\right)\cdot \max\{F(S): S\subseteq V,\ |S|\le K\}$.
\end{theorem}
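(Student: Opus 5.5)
The statement is the classical Nemhauser--Wolsey--Fisher guarantee. By Lemma~\ref{lem:local-to-global-submod}, $F$ is monotone and submodular on $2^V$. It is also nonnegative, since each $g_v\ge 0$. So the plan is to reproduce the standard greedy analysis for a monotone submodular set function under a cardinality constraint.

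\textbf{Setup.} Let $S^\ast$ be an optimal set with $|S^\ast|\le K$, write $\OPT_K\coloneqq F(S^\ast)$, and let $S_0=\emptyset\subseteq S_1\subseteq\cdots\subseteq S_K=S_{\mathrm{gr}}$ be the greedy iterates. We may assume $K\le n$; if greedy stops early because no vertex remains, then $S_{\mathrm{gr}}=V$ and monotonicity finishes immediately.

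\textbf{Key inequality.} For each $i$, we bound the gap to optimum by the optimal elements' marginals. Monotonicity gives the first step and a telescoping sum over $S^\ast\setminus S_i$ gives the second:
\[
\OPT_K \le F(S_i\cup S^\ast) \le F(S_i)+\sum_{u\in S^\ast\setminus S_i}\mathrm{marg}(u\mid S_i).
\]
In the telescoping step, each increment $F(S_i\cup\{u_1,\dots,u_j\})-F(S_i\cup\{u_1,\dots,u_{j-1}\})$ is at most $\mathrm{marg}(u_j\mid S_i)$ by the diminishing-returns property from Lemma~\ref{lem:local-to-global-submod}. The sum has at most $K$ terms. Each term is at most the greedy gain $F(S_{i+1})-F(S_i)$, because greedy picks the vertex of maximum marginal gain. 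Hence
\[
F(S_{i+1})-F(S_i)\ge \tfrac1K\bigl(\OPT_K-F(S_i)\bigr).
\]

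\textbf{Recursion.} Set $\delta_i\coloneqq \OPT_K-F(S_i)$. The key inequality becomes $\delta_{i+1}\le(1-1/K)\delta_i$, so
\[
\delta_K\le(1-1/K)^K\delta_0\le e^{-1}\delta_0.
\]
Since $\delta_0=\OPT_K-F(\emptyset)\le \OPT_K$ by nonnegativity of $F(\emptyset)$, this yields $F(S_{\mathrm{gr}})\ge(1-1/e)\OPT_K$.

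The only delicate step is the telescoping bound, which must invoke submodularity for the sets $S_i\cup\{u_1,\dots,u_{j-1}\}\supseteq S_i$; this is exactly the form supplied by Lemma~\ref{lem:local-to-global-submod}. The rest is routine. The locality remark preceding the theorem only concerns efficient computation of the marginals and plays no role in the approximation bound.
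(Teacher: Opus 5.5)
Your proof is correct and is the standard Nemhauser--Wolsey--Fisher argument. The paper gives no proof of its own: it establishes monotonicity, submodularity and nonnegativity of $F$ via Lemma~\ref{lem:local-to-global-submod} and cites the result, so you have filled in the cited analysis; the steps you leave implicit are immediate (greedy gains are nonnegative by monotonicity when $|S^\ast\setminus S_i|<K$, and $\delta_i\ge 0$ because each $S_i$ is feasible).
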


\section{Experiments} \label{apd:experiments}

We evaluate CFDP and an ILP baseline in three scenarios.
In the budgeted monotone submodular regime, we additionally evaluate Standard Greedy from \Cref{apd:submod}.
Scenarios~1--2 compare \emph{exact} solvers (CFDP vs.\ ILP) on (i) road-network subgraphs, where $\tw(G^2)$ grows with $n$, and (ii) $2\times r$ ladder graphs, where $\tw(G^2)$ is constant.
Scenario~3 studies the \emph{budgeted monotone submodular} variant on large road subgraphs, comparing Standard Greedy with ILP.

\subsection{Experimental setup and methods}
CFDP is our exact DP on a tree decomposition of $G^2$; runtimes are end-to-end, including constructing $G^2$ and computing the decomposition.
We compute tree decompositions using an implementation of the algorithm of \cite{tamaki2017positiveinstance}.
As an exact baseline we formulate \name as an ILP (described below), solved with Gurobi Optimizer~11.0.0.
We impose a fixed wall-clock time limit of 10 hours per run; timed-out instances are omitted from the plots.

\paragraph*{Road datasets and BFS-induced subgraphs.}
Scenarios~1 and~3 use road networks from the Network Data Repository (NRVis) \cite{rossi2015networkrepository}.
We use the MatrixMarket datasets \texttt{road-minnesota}\footnote{\url{https://nrvis.com/./download/data/road/road-minnesota.zip}} and \texttt{road-usroads}\footnote{\url{https://nrvis.com/./download/data/road/road-usroads.zip}}.
We treat these datasets as undirected, unweighted graphs.
Given a starting vertex $s$, we compute a breadth-first search (BFS) order with deterministic tie-breaking by sorting neighbors by their identifiers.
We fix $s$ for each dataset and reuse the resulting BFS order across all values of $n$.
We use $s=2417$ for \texttt{road-minnesota} and $s=58610$ for \texttt{road-usroads}.
For a target size $n$, we take the first $n$ vertices in this order and consider the induced subgraph.
This produces connected subgraphs and ensures that increasing $n$ yields nested instances.

\paragraph*{Scenario 1 (Minnesota road; synthetic rewards).}
We use BFS-induced subgraphs of the Minnesota road network (\texttt{road-minnesota}) with label set size $L=3$.
We vary $n$ between 40 and 500.
Local reward functions are synthetic random tables over $\lcal^{|\Ncl{v}|}$.

\paragraph*{Scenario 2 (ladder graphs; bounded $\tw(G^2)$).}
We use $2\times r$ ladder graphs with $n=2r$ and label set size $L=3$.
We vary $r$ between 20 and 10000 (i.e., $n$ between 40 and 20000).
Local reward functions are synthetic random tables over $\lcal^{|\Ncl{v}|}$.

\paragraph*{Scenario 3 (US roads; budgeted monotone submodular).}
We use BFS-induced subgraphs of the US road network (\texttt{road-usroads}) with binary labels $L=2$.
We vary $n$ between 200 and 126146.
We impose a cardinality budget $K=\min\{\lfloor 0.02n\rfloor,400\}$.
For each $n$ we generate five random objectives and run five repetitions per objective.
The objective is constructed using a max-type local gadget with weights $w_{\mathrm{hi}}=100$ and $w_{\mathrm{lo}}$ sampled uniformly from $[30,90]$.
Concretely, for each vertex $v$ we sample a weight $w_{\mathrm{lo}}(v)\in[30,90]$ and define $g_v:2^{\Ncl{v}}\to\mathbb{R}_{\ge 0}$ by $g_v(T)\coloneqq 0$ for $T=\emptyset$ and, for $T\neq\emptyset$, $g_v(T)\coloneqq w_{\mathrm{hi}}$ if $v\in T$ and $g_v(T)\coloneqq w_{\mathrm{lo}}(v)$ otherwise.
We set $f_v(x_{\Ncl{v}})\coloneqq g_v(\{u\in\Ncl{v}: x_u=1\})$.
Since $g_v$ is a nonnegative combination of the submodular function $T\mapsto \mathbf{1}[T\neq\emptyset]$ and the modular function $T\mapsto \mathbf{1}[v\in T]$, it is monotone and submodular, hence the assumptions of \Cref{lem:local-to-global-submod} hold.

\paragraph*{ILP baseline.}
We use the following standard local-configuration ILP for \name.
For each vertex $v\in V$ and label $\ell\in\lcal$ we introduce a binary variable $x_{v,\ell}$ indicating $x_v=\ell$.
For each $v\in V$ and neighborhood assignment $a:\Ncl{v}\to\lcal$ we introduce a binary variable $y_{v,a}$ selecting this local configuration.
\begin{equation}
\label{eq:ilp-local-config}
\begin{aligned}
\max\ & \sum_{v\in V}\ \sum_{a:\Ncl{v}\to\lcal} f_v(a)\,y_{v,a} \\
\text{s.t.}\ & \sum_{\ell\in\lcal} x_{v,\ell}=1 && \forall v\in V, \\
& \sum_{a:\Ncl{v}\to\lcal} y_{v,a}=1 && \forall v\in V, \\
& \sum_{a:\Ncl{v}\to\lcal,\,a(u)=\ell} y_{v,a}=x_{u,\ell} && \forall v\in V,\ u\in\Ncl{v},\ \ell\in\lcal, \\
& \sum_{v\in V} x_{v,1}\le K && \text{(Scenario~3 only)}, \\
& x_{v,\ell}\in\{0,1\},\ y_{v,a}\in\{0,1\} && \forall v,\ell,a.
\end{aligned}
\end{equation}

\subsection{Results}
\Cref{fig:exp12-runtime} compares CFDP with the ILP baseline.
In Scenario~1 (left), we use Minnesota road subgraphs with random rewards ($L=3$), where $\tw(G^2)$ grows with $n$.
CFDP is consistently faster than ILP on all sizes solvable by both methods (e.g., $15$ seconds vs.\ $16$ minutes at $n=200$), before both eventually time out (CFDP at $n=340$ and ILP at $n=260$).
In Scenario~2 (right), we use $2\times r$ ladder graphs with constant $\tw(G^2)$ (here $\tw(G^2)=4$ for all $r$).
CFDP scales smoothly up to $n=20000$ (about $20$ seconds), whereas ILP already hits the time limit at $n=120$, matching the predicted dependence on $\tw(G^2)$.

In Scenario~3 (\Cref{fig:exp3-runtime,fig:exp3-ratio}), we consider the budgeted monotone submodular variant ($L=2$) on subgraphs of US road network with budget $K=\min\{\lfloor 0.02n\rfloor,400\}$.
Greedy remains below about $21$ seconds even at $n=126{,}146$, while ILP takes about $30$ minutes (\Cref{fig:exp3-runtime}).
Greedy achieves approximation ratios around $0.8$ with little variation across $n$, exceeding the worst-case guarantee $1-1/e$ (\Cref{fig:exp3-ratio}).
We omit CFDP in Scenario~3 because it targets the unbudgeted additive/local objective.
Extending it to a budgeted monotone submodular objective would both introduce a factor $K$ in the DP state space and require redesigning the DP to handle a non-local set function $F$, whereas the ILP adds the budget via one constraint and Greedy is the standard approach.

\begin{figure}[tb]
    \centering
    \begin{subfigure}[t]{0.5\linewidth}
        \centering
        \includegraphics{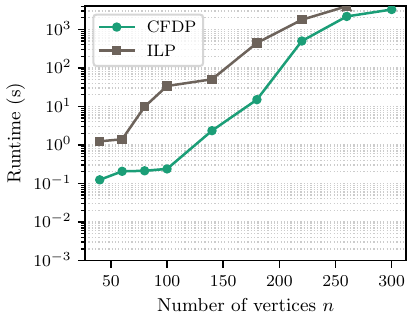}
        \caption{Minnesota road subgraphs.}
        \label{fig:exp1-runtime}
    \end{subfigure}%
    \begin{subfigure}[t]{0.5\linewidth}
        \centering
        \includegraphics{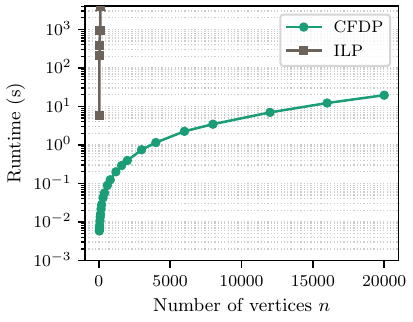}
        \caption{Ladder graphs. Here $\tw(G^2)=4$ for all $r$.}
        \label{fig:exp2-runtime}
    \end{subfigure}
    \caption{Scenario~1--2 end-to-end runtime comparison of CFDP and the ILP baseline.
    End-to-end includes constructing $G^2$ and computing the decomposition.
    Each point is the mean over five repetitions, with shaded bands showing one standard deviation.
    }
    \label{fig:exp12-runtime}
\end{figure}

\begin{figure}[tb]
    \centering
    \begin{subfigure}[t]{0.5\linewidth}
        \centering
        \includegraphics{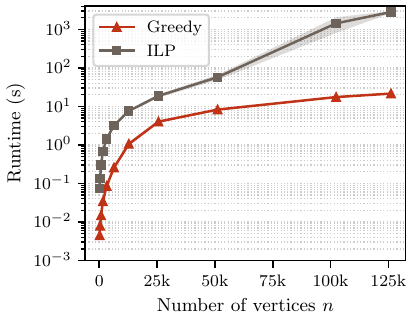}
        \caption{Runtime.}
        \label{fig:exp3-runtime}
    \end{subfigure}%
    \begin{subfigure}[t]{0.5\linewidth}
        \centering
        \includegraphics{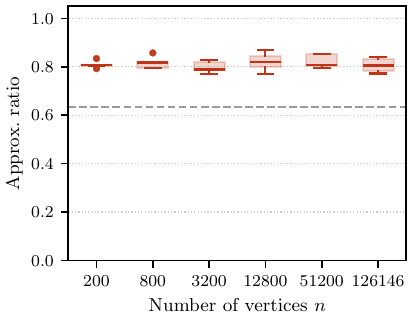}
        \caption{Approx.\ ratio.}
        \label{fig:exp3-ratio}
    \end{subfigure}
    \caption{Scenario~3 results in the budgeted monotone submodular setting.
    Left: runtime comparison of Greedy and ILP (mean $\pm$ one s.d.\ over $25$ runs).
    Right: box plot of the approximation ratio $F(S_{\mathrm{gr}}) / \OPT$ over the same runs.
    The dashed line marks $1-1/e$.
    }
    \label{fig:exp3}
\end{figure}

\paragraph*{Representative graph visualizations.}
\Cref{fig:app-exp1-bfs} and \Cref{fig:app-exp3-bfs} visualize BFS-induced subgraphs from Scenarios~1 and~3.
The BFS start vertex $s$ is highlighted.
For Scenario~3 we visualize only the first 2000 BFS vertices for readability.

\begin{figure}[tb]
    \centering
    \includegraphics{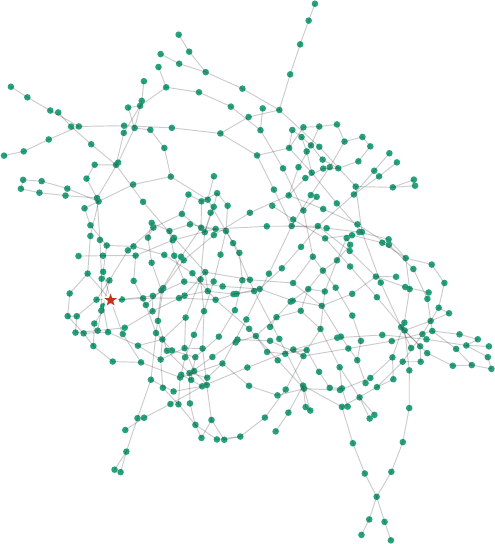}
    \caption{Scenario~1: BFS-induced subgraph of the Minnesota road network (\texttt{road-minnesota}) at $n=350$ and $m=413$.
    The red star marks the BFS start vertex.
    }
    \label{fig:app-exp1-bfs}
\end{figure}

\begin{figure}[tb]
    \centering
    \includegraphics{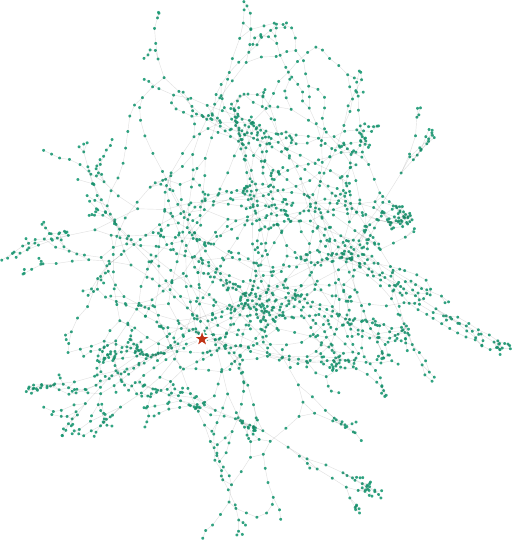}
    \caption{Scenario~3: BFS-induced subgraph of the US road network (\texttt{road-usroads}) at $n=120000$ and $m=153{,}781$.
    The red star marks the BFS start vertex.
    For readability we visualize only the first 2000 BFS vertices.
    }
    \label{fig:app-exp3-bfs}
\end{figure}

\end{document}